\newcommand*{\Defeq}{\,\coloneqq\,}
\newcommand{\Def}{\Defeq}
\newcommand*{\Comp}{\,\circ\,}
 \newcommand{\Filter}[1]{\,\triangleright\,}
\newcommand*{\Nat}{\mathbb{N}} 
\newcommand*{\Natinf}{\Nat\cup\{+\infty\}}
\newcommand*{\Bool}{\mathbb{B}}
\newcommand*{\Real}{\mathbb{R}}
\newcommand*{\Inter}{\,\cap\,}
\newcommand*{\Union}{\,\cup\,}
\newcommand*{\Max}[2]{\mathrm{max}({#1},{#2})}
\newcommand*{\Min}[2]{\mathrm{min}({#1},{#2})}
\newcommand*{\Setc}[2]{\{ {#1}\,\mid\, {#2} \}}   
\newcommand*{\Powerset}[1]{\mathcal{P}(#1)}
\newcommand*{\CB}[1]{{\mathrm{CB}({#1})}}  
\newcommand*{\Compact}[1]{{\mathrm{Comp}({#1})}}  
\newcommand*{\V}[1]{{#1}}
\newcommand*{\Product}[2]{(\Pi {#1})\,{#2}}
\newcommand*{\Hausdorff}[1]{\mathcal{H}_{#1}}
\newcommand*{\Hd}[3]{\mathcal{H}_{#1}({#2},{#3})}
\newcommand*{\fix}[1]{{\mathrm{fix}}({#1})}
\renewcommand*{\comment}[1]{}
\newcommand*{\Par}{\otimes}
\newcommand*{\Magic}{\mathrm{magic}}
\newcommand*{\Abort}{\mathrm{abort}}
\newcommand*{\Angelic}{\mathrm{\sqcup}}
\newcommand*{\Demonic}{\mathrm{\sqcap}}
\newcommand{\Ord}[1]{v({#1})}
\newcommand{\Norm}[1]{|{#1}|}
\newcommand{\Poly}[2]{{#1}[{#2}]}
\newcommand{\Power}[2]{{#1}\llbracket {#2}\rrbracket}
\newcommand*{\Imp}{\,\Rightarrow\,}
\newcommand*{\Iff}{\,\iff\,}
\newcommand*{\Forall}[2]{(\forall {#1})\,{#2}}
\newcommand*{\Cross}{\times}
\newcommand{\Doubleplus}{+\kern-1.3ex+\kern0.8ex}
\newcommand{\Stream}[1]{{#1}^{\omega}}
\newcommand{\StreamN}[2]{{{#1}^{\omega, {#2}}}}
\newcommand{\Head}[1]{\mathrm{hd}({#1})}
\newcommand{\Tail}[1]{\mathrm{tl}({#1})}
\newcommand*{\Prefix}[2]{{#1}\!\!\restriction_{#2}}
\newcommand{\ITE}[3]{\mathrm{{#1} \,?\, {#2}  \,:\, {#3}}}
\newcommand{\LOOP}[1]{{#1}^+}
\newcommand{\SP}[1]{{\mathrm{sp}}_{#1}}
\newcommand{\WP}[1]{{\mathrm{wp}}_{#1}}
\newcommand{\BO}[2]{\mathrm{B}({#1}, {#2})}        
\newcommand{\BC}[2]{\mathrm{B}[{#1}, {#2}]}  
\newcommand{\Count}[2]{\mathrm{\#}_{#1}{#2}}
  \providecommand\BibTeX{{%
    \normalfont B\kern-0.5em{\scshape i\kern-0.25em b}\kern-0.8em\TeX}}}
\begin{document}
\title{Solving Causal Stream Inclusions}
\author{Harald Ruess}
\email{harald.ruess@entalus.com}
\orcid{0000-0002-1405-2990}
\affiliation{%
  \institution{Entalus Computer Science Labs}
  \streetaddress{2071 Gulf of Mexico Drive}
  \city{Longboat Key}
  \state{FL}
  \country{USA}
  \postcode{34228-3202}
}

\renewcommand{\shortauthors}{}

\begin{abstract}
  We study solutions to systems of stream inclusions of the form $f \in T(f)$, where the nondeterministic transformer $T$ on $\omega$-infinite streams is assumed to be {\em causal} in the sense that elements in output streams are determined by a finite prefix of inputs. 
  We first establish a correspondence between logic-based {\em causality} 
  and metric-based {\em contraction}\@. 
  Based on this {\em causality-contraction connection} we then apply fixpoint principles to the spherically complete ultrametric space of streams to construct solutions of stream inclusions.
The underlying fixpoint iterations induce {\em fixpoint induction principles} to reason about these solutions.
In addition, the fixpoint approximation provides an {\em anytime algorithm} with which finite prefixes of solutions can be calculated.
These developments are illustrated
for some central concepts of system design.
\end{abstract}

\maketitle

\section{Introduction}\label{sec:intro}
We consider existence, uniqueness, approximations, and reasoning principles
for solutions (fixpoints) of the {\em stream inclusions}
    \begin{align}\label{stream.inclusions}
     f \in T(f)\mbox{\@,}
    \end{align}
where $f \Defeq (f_1, \ldots, f_n)$, 
for $n \in \Nat$\@, is a vector of 
infinite {\em streams} $f_i$ of values, 
and the vector- and multivalued {\em stream transformer} $T$ is {\em causal} in that every element in the output stream is completely determined by a finite history of inputs.
These stream inclusions are ubiquitous in 
computer science
and other fields of knowledge such as biology, economics, and artificial intelligence to model the evolution of 
dynamic systems under uncertainty.

Typical sources of uncertainty are lack of epistemic knowledge about the behavior of the system under consideration as well as modeling artifacts such as underspecification, whereby a system is said to be nondeterministic and a system specification is underspecified 
if for some input several outputs are admitted.
Underspecified functions map streams onto sets of streams, and
they are isomorphic to relations on streams. 

Consider, for example, mixing two Boolean input streams $f, g \in 2^{\omega}$ such that its output stream $h \in 2^{\omega}$ contains as many $1$s as the two input streams combined; i.e.,  $h \in \mathit{mix}(f, g)$ if and only if  $\Count{1}{h} = \Count{1}{f} + \Count{1}{g}$\@.
Now, a feedback loop is easily modeled by $f \in \mathit{mix}(f, g)$, specified by the equation $\Count{1}{f} = \Count{1}{g} + \Count{1}{f}$\@.
Validity of expected properties such as 
$\Count{1}{g} = 0 \Imp \Count{1}{f} = 0$, however, can only be shown by assuming $\mathit{mix}$ to be causal~\cite{broy2023}\@.
Other prominent examples of nondeterministic functions include {\em merging} of two streams in such a way that all of any infinite input stream is absorbed, and the related problem of {\em fair scheduling}~\cite{park2005concurrency}\@. 
In general, any relation between input and output streams is specified by such nondeterministic stream transformer.
Now, two mutually dependent nondeterministic stream transformers, say $T_1$ and $T_2$ give rise to a system
   \begin{align*}
     f_1 &\in T_1(f_1, f_2) \\
     f_2 &\in T_2(f_1, f_2)
   \end{align*}
with two stream inclusions, which can easily be recast in the vector-valued form~(\ref{stream.inclusions})\@.

If the stream transformer $T$ 
is deterministic then the stream inclusion~(\ref{stream.inclusions}) reduces to a system of stream equalities $f = T(f)$\@.
Interacting systems of deterministic stream processors, for instance, are traditionally modeled as the least fixpoint of this equality, where $T$ is a Scott-continuous stream transformer in a complete partial order~\cite{gilles1974semantics}\@.
But such a denotational semantics does not 
extend na{\"i}vely to unbounded nondeterminism~\cite{back1980semantics,staples1985fixpoint,pratt1984pomset,broy1986theory}\@.
One option are extensions of complete partial orders into power domains~\cite{plotkin1976powerdomain,smyth1978power}, but there is no partial order 
on sets of streams that can be used as the approximation ordering for defining sets of least fixpoints of multi-valued functions on streams~\cite{broy2023}\@.

We take a different approach by investigating under which conditions fixpoints exist for causal stream transformers with unconstrained nondeterminism, which is naturally modeled with multivalued maps.
Here we distinguish between {\em weakly} and {\em strongly causal} stream transformers,
whereby the latter notion also implies a strict, bounded delay for which the outputs are determined.

The main results for solving stream inclusions of the form~(\ref{stream.inclusions})
are as follows.
\begin{itemize}
\item 
For strongly causal stream transformers $T$ with nonempty, compact codomains, solutions $f \in T(f)$ are contained in the (unique) fixpoint $F = \SP{T}(F)$ for the {\em strongest post} of $T$~(Theorem~\ref{thm:main0})\@.
This latter fixpoint is obtained as the limit of a Picard iteration with an explicit quantitative bound on the approximation of each iteration. 
\item  $f \in T(f)$ has a solution if $T$ is strongly causal, not the constant map to the empty set, and all the codomains of $T$ are closed in the topology induced by the prefix metric on streams~(Theorem~\ref{thm:main1})\@. In addition, we identify a slightly stronger condition than strong causality to establish the uniqueness of solutions of the stream inclusion~(\ref{stream.inclusions})\@. 
A fixpoint induction principle is derived to reason about these solutions.
\item  $f \in T(f)$ has a solution if $T$ is strongly causal and all codomains of $T$ are nonempty and compact~(Theorem~\ref{thm:main2}). 
\item  If the stream transformer $T$ is weakly causal then either $f \in T(f)$ has a solution or there exists a ball of streams with positive radius on which the distance, as measured by the prefix metric on streams, between input stream and corresponding output sets with respect to $T$ is, in a sense to be made precise, invariant~ (Theorem~\ref{thm:main3})\@. 
\end{itemize}

These developments are based on the correspondence between the logic-based notion of causality and the quantitative concept of contraction (and nonexpansion)\@.
More precisely, we show that a vector- and multivalued stream transformer is weakly causal if and only if it is {\em nonexpansive}, and it is strongly causal if and only if it is {\em contractive} for the prefix metric.
This approach is motivated by recent work of Broy~\cite{broy2023} on stream-based system design calculus for strongly causal transformers, which is based on Banach's fixpoint principle to solve stream equations on deterministic stream transformers.
We generalize this observation by
    (1) modeling nondeterministic and mutually dependent system components 
            as conjunctions of causal stream inclusions,
    (2) establishing the equivalence of vector- and multivalued causal stream transformers with contraction in a spherically complete ultrametric space of streams based on the prefix distance of streams, and
    (3) applying multivalued fixpoint principles in this ultrametric stream space
          to obtain solutions of causal stream inclusions. 
 
 First, Section~\ref{sec:preliminaries} explains some basic concepts of distance and topological spaces that are used in this paper.
 The prefix distance between two streams is then measured in Section~\ref{sec:streams}
 using the longest common prefix. 
 This results in a spherically 
 complete ultrametric space of streams. 
 The underlying concept of the prefix distance derives from Cantor sets, and it generalizes to arbitrary stream products and also to sets of streams via the induced Hausdorff distance.
 Next, Section~\ref{sec:transformers} discusses the familiar concepts of {\em stream transformers} 
 along with some horizontal and vertical composition operators.
 The focus in Section~\ref{sec:causality} is on {\em causal} stream transformers, whose outputs are 
 determined by a finite history of inputs. 
 For instance, causality is preserved under composition and refinement of stream transformers. 

In Section~\ref{sec:contractions} we develop a metric-based characterization of causal stream transformers.
More specifically, we show that a nondeterministic 
stream transformer is causal if and only if it is contractive with respect to the given prefix ultrametric on streams. 
This correspondence extends to a notion of Lipschitz contraction based on the Hausdorff distance between sets of streams,
since stream transformers with nonempty compact images are Lipschitz contractive if and only if they are causal.

Since the induced Hausdorff metric is a complete metric space on the set of nonempty compact sets, we obtain unique fixpoints in Section~\ref{sec:fixpoints} for causal transformers such as the {\em weakest pre} and the {\em strongest post} set transformers\@.
Moreover, there is a linearly, strictly decreasing upper bound of the prefix distance between this fixpoint and its approximation by the underlying Picard iteration, 
which suggests an {\em anytime} approximation algorithm.
Furthermore, we formulate, in the spirit of Park's lemma, induction principles for fixpoints of set transformers, and we show that the set of fixpoints of a multivalued stream transformer $T$ is contained in the fixpoint of the strongest post transformer for $T$\@.

Additional fixpoint results in Section~\ref{sec:fixpoints} construct solutions for the stream inclusion~(\ref{stream.inclusions}) 
by showing that every strongly causal vector- and multivalued stream transformer has a fixpoint as long as it is not the constant map to the empty set and its codomain is restricted to closed sets only.  We also identify a slightly 
stronger condition for which this fixpoint is unique. The underlying Picard iteration enables us to derive
an induction principle for reasoning about these fixpoints.
In Section~\ref{sec:fixpoints} we also state some immediate consequences of the correspondence of causality and Lipschitz contractivity
for fixpoints of weakly causal maps.
Section~\ref{sec:remarks}  discusses further consequences of the causality-contraction connection, and
Section~\ref{sec:conclusions} concludes with a discussion on the relevance of these developments for the principled design of systems.

\section{Preliminaries}\label{sec:preliminaries}

We summarize a hodgepodge of concepts and notation for topological and metric spaces as they will be used in the remainder of this paper, with the intention of making the developments of this paper as self-contained as possible. 
Readers who are familiar with these fundamental concepts can proceed to the next section.

For a metric space $(M, d)$ 
the
sets  $\BO{x}{r} \Defeq \Setc{y}{d(x, y) < r}$ and $\BC{x}{r} \Defeq \Setc{y}{d(x, y) \leq r}$
are called the {\em open} and {\em closed balls} of {\em center} $x$ and {\em radius} $r$, respectively\@.
The family of open balls forms a base of neighborhoods for a uniquely determined Hausdorff topology on $M$, which is the {\em topology induced by $d$} (on $M$)\@.
Open, closed, bounded, (dis)connected, convex, totally bounded (precompact), and compact sets are defined with respect to the metric-induced topology.
The set of nonempty, closed, and bounded subsets of $M$, in particular, is 
denoted by $\CB{M}$,  and $\Compact{M}$ is the set of nonempty compact subsets of $M$\@. 

The {\em distance} $d(a, B)$ of an element $a \in M$ to a nonempty set $B \subseteq M$ is defined by
     $d(a, B) = \inf_{b \in B}{d(a, b)}$\@. 
Clearly, $d(a, b) = d(a, \{b\})$ for all $a, b \in M$\@. 
For a bounded metric space $(M, d)$ the {\em Hausdorff distance} $\Hausdorff{d}(A, B)$ between two nonempty sets
$A, B \subseteq M$ measures the "longest path" to get from $A$ to $B$, or vice versa, from $B$ to $A$\@. 
     \begin{align}
       \Hausdorff{d}(A, B) &\Defeq \max{(\sup_{x \in A}{(d(x, B))},
                              \sup_{y \in B}{(d(y, A))})}\mbox{\@.}
     \end{align}
Clearly, all suprema exist for a bounded  metric $d$, 
and $d(a, b) = \Hausdorff{d}(\{a\}, \{b\})$\@. 
Furthermore, if $A$, $B$ are closed then their Hausdorff distance
$\Hausdorff{d}(A, B)$ is finite, and $(\CB{M}, \Hausdorff{d})$ is a metric space\@.
 
A sequence $(x_k)_{k \in \Nat}$ in $M$  is {\em Cauchy} if and only if 
for all $\varepsilon > 0$ there exists $N \in \Nat$ such that
$d(x_n, x_m) < \varepsilon$ for all $n, m \geq N$\@. 
Such a sequence $x_k$ {\em converges }to $x^* \in M$ if and only if
for each neighborhood $U$ of $x$ there exists $N \in \Nat$ such that $x_n \in U$ for all $n \geq N$\@.
Now, the space $M$ is {\em Cauchy complete} if every Cauchy sequence converges to some $x^* \in M$\@.  
Equivalently, $M$ is Cauchy complete if and only if the intersection of nested sequences
of closed balls whose radius approaches to $0$ are nonempty\@.

A map $T: M \to M$ (in a metric space $M$) is {\em contractive}
if there exists a constant $l$ with $0 < l < 1$ such that
$d(T(x), T(y)) \leq l \cdot d(x, y)$ for all $x, y \in M$\@.
Traditionally, {\em Banach's contraction principle} establishes that
in a Cauchy complete metric space $(M, d)$ there is a unique fixpoint
for every contracting $T: M \to M$~\cite{kelley1955general}\@. 
Starting with the {\em Picard iteration} $x_{n+1} = T(x_n)$ 
for $n\in\Nat$ with $x_0$ arbitrary, one concludes from the contraction property 
with Lipschitz constant $0 \leq l <1$, that $d(x_{n+1}, x_n) \leq l\cdot d(x_n, x_{n-1})$,
and,  therefore, $d(x_{n+1}, x_n) \leq \nicefrac{l^n}{1 - l} \cdot d(x_0, x_1)$\@. 
From this, one concludes that the sequence $(x_n)_{n \in \Nat}$ is Cauchy, 
and, for completeness, that its limit $x^* \in M$ is a fixpoint. 
This fixpoint is unique, since we get  $x^* = y^*$ for any two fixpoints from
 $d(x^*, y^*) = d(T(x^*), T(y^*)) \leq l \cdot d(x^*, y^*)$\@. 

A map $T: M \to M$ is said to be {\em shrinking}
if $d(T(x), T(y)) < d(x, y)$ for all $x, y \in M$\@.
A shrinking map $T$ need not have a 
fixpoint in a complete metric space.

An {\em ultrametric} space $(M, d)$ is a metric space with the {\em strong triangle inequality}, 
for all $x, y, z \in M$\
    \begin{align}\label{strong.triangle.inequality}
    d(x, y) &\leq \Max{d(x, y)}{d(y, z)}\mbox{\@.}
    \end{align}
 
As a consequence of~(\ref{strong.triangle.inequality}) the {\em isosceles triangle principle}
       \begin{align}
       d(x, z) &= \Max{d(x, y)}{d(y, z)}
       \end{align}
holds whenever $d(x, y) \neq d(y, z)$\@. 
Further immediate consequences of the strong triangle inequality are:
(1) every point inside a ball is its center, that is, if 
$d(x, y) < r$ then $\BO{x}{r} = \BO{y}{r}$,
(2) all balls of strictly positive radius $r$ are {\em clopen}, that is both open and closed,
(3) if two balls are not disjoint then one is included in the other,
(4) the {\em distance} $d(B_1, B_2) \Defeq \inf_{x\in B_1, y \in B_2}d(x, y)$ of two disjoint nonempty balls $B_1$,$B_2$ is obtained as the distance of two arbitrarily chosen elements $x \in B_1$, $y \in B_2$\@. 
(4) ultrametric spaces are {\em totally disconnected},
that is, every superset of a singleton set is disconnected,
and 
(5) a sequence $(x_k)_{k \in \Nat}$ in an ultrametric space is Cauchy
if and only if $\lim_{k\to \infty}d(x_{k+1}, x_k) = 0$\@. 
We will also make use of a generalized strong triangle 
inequality 
for ultrametric distances. 
  \begin{align}
      d(a, C) &\leq \max{(d(a, b), d(b, C))}\mbox{\@,} 
  \end{align}
for all $a, b \in M$ and $\emptyset \neq C \subseteq M$\@.
%
\begin{proposition}
If $(M, d)$ is ultrametric then $(\CB{M}, \Hausdorff{d})$ is also ultrametric.
\end{proposition}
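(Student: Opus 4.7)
The plan is to verify the strong triangle inequality
$\Hausdorff{d}(A, C) \leq \Max{\Hausdorff{d}(A, B)}{\Hausdorff{d}(B, C)}$
for all $A, B, C \in \CB{M}$; the symmetry and non-degeneracy of $\Hausdorff{d}$ are already subsumed in the statement that $(\CB{M}, \Hausdorff{d})$ is a metric space, so only the ultrametric strengthening of the triangle inequality is new content.

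The main tool will be the generalized strong triangle inequality stated just before the proposition, namely $d(a, C) \leq \Max{d(a, b)}{d(b, C)}$ for every $a \in M$, $b \in M$, and nonempty $C \subseteq M$. I would first fix an arbitrary $a \in A$ and estimate $d(a, C)$ as follows. Write $h_{AB} \Defeq \Hausdorff{d}(A, B)$ and $h_{BC} \Defeq \Hausdorff{d}(B, C)$. For every $b \in B$ one has $d(b, C) \leq \sup_{y \in B} d(y, C) \leq h_{BC}$, so $d(a, C) \leq \Max{d(a, b)}{h_{BC}}$ holds for every $b \in B$. Taking the infimum over $b \in B$ and exchanging the infimum with the maximum against the constant $h_{BC}$ yields $d(a, C) \leq \Max{d(a, B)}{h_{BC}} \leq \Max{h_{AB}}{h_{BC}}$. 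Taking the supremum over $a \in A$ gives $\sup_{a \in A} d(a, C) \leq \Max{h_{AB}}{h_{BC}}$, and the symmetric argument starting from $c \in C$ gives $\sup_{c \in C} d(c, A) \leq \Max{h_{AB}}{h_{BC}}$. Combining both bounds via the definition of $\Hausdorff{d}$ yields the desired strong triangle inequality.

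The only step that requires a moment of care is the swap
$\inf_{b \in B} \Max{d(a, b)}{h_{BC}} = \Max{d(a, B)}{h_{BC}}$,
which I would justify by a short case split: if $d(a, B) \leq h_{BC}$ then picking $b \in B$ with $d(a, b)$ arbitrarily close to $d(a, B)$ makes the max equal to $h_{BC}$, which is also the value of the right-hand side; otherwise $d(a, b) > h_{BC}$ for every $b \in B$ and the infimum reduces to $d(a, B)$, again matching the right-hand side. This case analysis is the only obstacle, and it is entirely routine; everything else is a direct application of the generalized strong triangle inequality together with the definition of the Hausdorff distance, so the proof is short.
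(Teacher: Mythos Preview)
The paper states this proposition without proof, so there is no argument to compare against; your proof is correct and is essentially the standard one. One minor remark: you only need the inequality $\inf_{b \in B}\Max{d(a,b)}{h_{BC}} \leq \Max{d(a,B)}{h_{BC}}$, not the full equality, and that direction follows immediately by choosing $b$ with $d(a,b)$ close to $d(a,B)$ and letting the slack go to zero---so the case split, while fine, is a bit more than is strictly required.
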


An ultrametric space $(M, d)$ is {\em spherically complete} if
$\Inter_{{B \in \mathcal C}} B \neq \emptyset$ for every chain ${\mathcal C}$
of balls $B_0 \supseteq B_1 \supseteq B_2 \supseteq \ldots$\@.
From this definition it is clear that spherical completeness implies Cauchy completeness. 
\begin{example}\label{ex:finite-ultrametric}
Let $M = \{\alpha,\beta,\gamma,\delta\}$ with 
$d(x,x) = 0$ for all $x \in M$,  
$d(\alpha,\beta) = d(\gamma,\delta) = \nicefrac{1}{2}$, 
$d(\alpha,\gamma) = d(\alpha,\delta) = d(\beta,\gamma) = d(\beta,\delta) = 1$, and
$d(y,x) = d(x,y)$ for all $x, y \in M$\@. 
Then $(M, d)$ is a spherically complete ultrametric space.
\end{example}
\begin{proposition}[(\cite{dieudonne2011foundations}, p. 59]
If $(M, d)$ is an ultrametric space then the ultrametric space
$(\Compact{M}, \Hausdorff{d})$ is 
Cauchy complete.
\end{proposition}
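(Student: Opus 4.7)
The plan is to exhibit an explicit candidate limit and then verify it belongs to $\Compact{M}$ and is approached in Hausdorff distance. Given a Cauchy sequence $(K_n)_{n \in \Nat}$ in $(\Compact{M}, \Hausdorff{d})$, I would set
\[
  K^* \Defeq \bigcap_{N \in \Nat} \overline{\bigcup_{n \geq N} K_n},
\]
which equivalently consists of all $x \in M$ arising as $x = \lim_{k} x_{k}$ for some sequence with $x_k \in K_k$. The argument implicitly uses Cauchy completeness of the ambient $M$; this is harmless here since the stream space to which the result will be applied is spherically complete.

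For nonemptiness, I would fix any $x_0 \in K_0$ and inductively choose $x_{n+1} \in K_{n+1}$ attaining the infimum $d(x_n, K_{n+1})$, which exists because $K_{n+1}$ is compact. Then $d(x_n, x_{n+1}) \leq \Hausdorff{d}(K_n, K_{n+1})$, and the strong triangle inequality gives $d(x_n, x_{n+k}) \leq \max_{n \leq i < n+k} \Hausdorff{d}(K_i, K_{i+1})$. The Cauchy hypothesis forces this maximum to vanish as $n \to \infty$, so $(x_n)$ is Cauchy in $M$ and converges to some $x^* \in K^*$.

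The main obstacle is showing $K^*$ is compact. Closedness is automatic since $K^*$ is an intersection of closed sets. For total boundedness, given $\varepsilon > 0$ I would pick $N$ with $\Hausdorff{d}(K_N, K_n) < \varepsilon$ for all $n \geq N$, so every such $K_n$ lies in the closed $\varepsilon$-neighborhood $U_\varepsilon$ of $K_N$. In the ultrametric this neighborhood is itself closed, so $\overline{\bigcup_{n \geq N} K_n} \subseteq U_\varepsilon$ and in particular $K^* \subseteq U_\varepsilon$. A finite $\varepsilon$-net of the compact set $K_N$ then serves, via the strong triangle inequality, as a finite $\varepsilon$-net for $K^*$. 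Closed and totally bounded inside a Cauchy complete $M$ yields compactness. The ultrametric structure is essential at exactly this step, because it prevents the radius from inflating when passing to closures.

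Finally, for $\Hausdorff{d}(K_n, K^*) \to 0$, fix $\varepsilon > 0$ and the threshold $N$ from above. Given $x \in K_n$ with $n \geq N$, I iterate the attainment construction of the nonemptiness step starting from $x$ to obtain $x^* \in K^*$ with $d(x, x^*) \leq \max_{k \geq n} \Hausdorff{d}(K_k, K_{k+1}) < \varepsilon$. Conversely, for $y \in K^*$ with $y = \lim_{k} y_k$ and $y_k \in K_k$, the generalized strong triangle inequality gives $d(y, K_n) \leq \max(d(y, y_k), \Hausdorff{d}(K_k, K_n))$, which drops below $\varepsilon$ for $k$ large. Both bounds combine to show $\Hausdorff{d}(K_n, K^*) \leq \varepsilon$ for $n \geq N$.
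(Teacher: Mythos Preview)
The paper does not give a proof of this proposition; it simply cites Dieudonn\'e, and your argument is essentially the classical one found there (construct the Kuratowski upper limit, verify nonemptiness via a diagonal Cauchy selection, total boundedness by comparison with a late $K_N$, and then Hausdorff convergence). You are also right to flag that Cauchy completeness of the ambient space $M$ is tacitly required and missing from the statement as printed---without it the claim fails (take singletons along a non-convergent Cauchy sequence in, say, the $p$-adic rationals)---though it holds in the stream spaces where the paper applies the result. One minor overstatement: the strong triangle inequality is not \emph{essential} in the total-boundedness step, since the hyperspace $(\Compact{M},\Hausdorff{d})$ is complete over any complete metric space, with the usual $2\varepsilon$ replacing your $\varepsilon$; the ultrametric hypothesis only sharpens constants there, whereas its genuine contribution is to make $(\Compact{M},\Hausdorff{d})$ itself ultrametric.
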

 
For a spherically complete ultrametric space $M$, every
strictly contracting $T:M \to M$
has a unique fixpoint in $M$~(\cite{petalas1993fixed}, Theorem 1)\@.
Moreover, if $T: M \to M$ is {\em nonexpansive}, that is
$d(T(x), T(y)) \leq d(x, y)$ for all $x, y \in M$, then
either $T$ has at least one fixpoint 
or there exists a ball $B$ of radius
$r > 0$ such that $T: B \to B$ and for which $d(u, T(u)) = r$ for each $u \in B$ ~(\cite{petalas1993fixed}, Theorem 2)\@.
Such a ball $B$ is said to be
{\em minimal $T$-invariant}\@.


\section{Streams}\label{sec:streams}
An $A$-valued {\em stream} in $\Stream{A} \Defeq \Nat \to A$, for a given nonempty set $A$ of {\em values}, 
is an infinite sequence $(a_k)_{k \in \Nat}$ with  $a_k \in A$\@.
Depending on the application context, streams are also referred to as discrete streams or signals, 
$\omega$-streams, $\omega$-sequences,  or $\omega$-words\@.
The {\em generating function}~\cite{charalambides2018enumerative} of a stream  $(a_k)_{k \in \Nat}$ is a
{\em formal power series}
   \begin{align}
   \sum_{k\in\Nat} a_k X^k
  \end{align}
in the {\em indefinite} $X$\@. 
These power series are {\em formal} as, in the algebraic view, 
the symbol $X$ is not being instantiated and there is no notion of convergence. 
We call $a_k$ the {\em coefficient} of $X^k$, and
the set of formal power series with coefficients in $A$ is 
denoted by $\Power{A}{X}$\@.
We also write $[X^k]f$ for the coeffient of $X^k$ in the formal power series $f$\@. 
Now, $\Head{f} \Defeq [X^0]f$ and 
$\Tail{f}$ is the unique stream such that $f = \Head{f} + X \cdot \Tail{f}$\@. 
A {\em polynomial} in $\Poly{A}{X}$ of degree $d \in \Nat$ is a formal power series $f$ which is {\em dull}, that is $[X^d]f \neq 0$ and $[X^n]f = 0$ for all $n > d$\@. 
For the one-to-one relationship between streams $\Stream{A}$ and formal power series $\Power{A}{X}$
we use these notions interchangeably. 
Streams are added componentwise and they are multiplied by {\em discrete convolution}\@.
   \begin{align}  \label{add}
       (\sum_{k\in\Nat} a_k X^k) + (\sum_{k\in\Nat} b_k X^k)           &\Defeq   \sum_{k\in\Nat}  (a_k + b_k) X^k \\  \label{multiply}
        (\sum_{k\in\Nat} a_k X^k) \cdot (\sum_{k\in \Nat} b_k X^k)          &\Defeq   \sum_{k \in\Nat}  (\sum_{i = 0}^k a_i b_{k-i}) X^k 
   \end{align}
With these operations and $A$ an (integrity) ring, $\Stream{A}$ becomes a commutative
(integrity) ring with zero element $0 \Defeq \Bar{0}$ and multiplicative identity $1 \Defeq \Bar{1}$\@.
Hereby $\Bar{a} \Defeq (a + \sum_{k \geq 1} 0X^k)$, for $a \in A$, is the injective and homomorphic {\em embedding} of the ring $A$ into the ring $\Power{A}{X}$ of formal power series.
Similarly, the ring of polynomials in $\Poly{A}{X}$ is injectively and homomorphically embedded in $\Power{A}{X}$ as dull formal power series\@. 

If $A$ is a field, then $\Stream{A} \simeq \Power{A}{X}$ is a {\em principal ideal domain} with the ideal 
$(X) = X\cdot\Stream{A}$ the only nonzero maximal ideal. 
Moreover, for $A$ a field or a  division ring, $(\Stream{A}, +, (a\cdot)_{a \in K})$
is a {\em linear space},
whereby the {\em dot product} $(a\cdot f)$, for $f \in \Stream{A}$, is defined by $\Bar{a} \cdot f$\@. 
Unless stated otherwise, we assume the values $A$ in $\Stream{A}$ to be a field\@. 

The multiplicative inverse $f^{-1}$ for $f \in \Stream{A}$ exists (in which case it
is unique) if and only if $[X^0]f \neq 0$\@.
We also write $\nicefrac{f}{g}$ instead of $f \cdot g^{-1}$\@.
In particular, the multiplicative inverse 
of $X$ does not exist in $\Stream{A}$\@. 
As a consequence, $\Stream{A}$ is not a field even when $A$ is a field. 

The {\em valuation} 
$\Ord{f}$ of a stream $f \in \Stream{A}$ is the minimal $k \in \Nat$ 
such that $[X^k]f \neq 0$, if any exists; otherwise $\Ord{f} \Defeq \infty$\@. 
By construction,
 \begin{align}\label{valuation}
     \Norm{f} \Defeq 2^{-\Ord{f}}\mbox{\@,}
   \end{align}
with $2^{-\infty} \Defeq 0$\@, 
is the {\em non-Archimedean absolute value} on $\Stream{A}$
induced by the valuation $\Ord{.}$\@~\cite{neukirch2013algebraic}\@. 
Notice that stream valuation trivially has the {\em non-Archimedean} property $\Norm{1 + \cdots + 1} \leq \Norm{1} = 1$, where $\leq$ is the total order on streams, as induced by the valuation $\Ord{.}$\@. 

The {\em prefix} distance between streams $f$ and $g$ is measured in terms of the 
longest common prefix:  the longer the common prefix, the closer a pair of streams.
Via the distance function
   \begin{align}\label{def:distance}
        d(f, g) &\Defeq \Norm{f - g}
   \end{align}
the set $\Stream{A}$ is a metric with a 
discrete {\em set of values} 
     $\Delta_d \Defeq  
      \Setc{2^{-n}}{n \in \Nat \cup \{\infty\}}
     $\@.
In fact, $d$ is an {\em ultrametric}, since, by construction,
the {\em strong triangle inequality}
   \begin{align}\label{def:strengthened.triangle}
   d(f, h) &\leq max(d(f,g), d(g, h))\mbox{\@.}
   \end{align}
holds for all streams $f$, $g$, $h$\@. 
\begin{proposition}
Both addition (\ref{add}) and multiplication (\ref{multiply}) of streams are continuous with respect to the topology induced by the prefix metric~$d$\@. 
\end{proposition}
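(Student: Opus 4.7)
The plan is to derive continuity of both operations directly from two algebraic properties of the non-Archimedean absolute value $\Norm{\cdot}$ defined in (\ref{valuation}): its ultrametric form $\Norm{f + g} \leq \max(\Norm{f}, \Norm{g})$, which is just the valuation-side restatement of (\ref{def:strengthened.triangle}), and its multiplicativity $\Norm{f \cdot g} = \Norm{f} \cdot \Norm{g}$, which follows from $\Ord{f \cdot g} = \Ord{f} + \Ord{g}$ once $A$ is a field so that $\Stream{A}$ is an integral domain. A third useful observation is that $\Ord{f} \geq 0$ for every $f \in \Stream{A}$, hence $\Norm{f} \leq 1$ throughout.

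For addition, equipping $\Stream{A} \times \Stream{A}$ with the usual product topology (generated, e.g., by the max metric), I would simply compute
\begin{equation*}
d(f_1 + g_1,\, f_2 + g_2) = \Norm{(f_1 - f_2) + (g_1 - g_2)} \leq \max(d(f_1, f_2),\, d(g_1, g_2)),
\end{equation*}
which shows that $+$ is in fact $1$-Lipschitz from the max-metric product into $(\Stream{A}, d)$, and in particular continuous.

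For multiplication, the standard telescoping identity
\begin{equation*}
f_1 g_1 - f_2 g_2 = f_1 (g_1 - g_2) + (f_1 - f_2) g_2,
\end{equation*}
followed by the ultrametric inequality, multiplicativity, and the bound $\Norm{f_1}, \Norm{g_2} \leq 1$, yields
\begin{equation*}
d(f_1 g_1,\, f_2 g_2) \leq \max(\Norm{f_1} \cdot d(g_1, g_2),\, \Norm{g_2} \cdot d(f_1, f_2)) \leq \max(d(g_1, g_2),\, d(f_1, f_2)),
\end{equation*}
so $\cdot$ is likewise non-expansive and a fortiori continuous.

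I do not anticipate a real technical obstacle; the only point that deserves care is invoking multiplicativity of $\Norm{\cdot}$, which depends on $\Stream{A}$ being an integral domain, i.e., on the standing hypothesis that $A$ is a field. As a pleasant byproduct the argument shows that both arithmetic operations are actually non-expansive, not merely continuous, which is the typical state of affairs for the metric induced by a non-Archimedean absolute value and fits the causality-contraction theme developed later in the paper.
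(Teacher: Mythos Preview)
Your argument is correct and is exactly the standard proof; the paper states this proposition without proof, so there is nothing to compare against. One small remark: you only need the submultiplicative inequality $\Norm{fg}\leq\Norm{f}\,\Norm{g}$ (equivalently $\Ord{fg}\geq\Ord{f}+\Ord{g}$), which holds for any commutative ring of coefficients, so your caveat about requiring $A$ to be a field is unnecessary for this particular proposition.
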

The topology induced by the prefix metric $d$ is identical to 
the product topology $A^{\Nat}$, where each copy of $A$ is the discrete topology. 
Therefore Tychonoff's theorem applies, and 
$\Stream{A}$ is compact if and only if $A$ is finite. 


The following fact can easily be checked from the definition of spherically completeness, which requires that the intersection of non-increasing sequences of non-empty balls in $\Stream{A}$ is nonempty.
\begin{lemma}\label{lem:streams.spherically.complete}
   $(\Stream{A}, d)$ of streams is spherically complete.
\end{lemma}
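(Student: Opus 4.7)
The plan is to exploit the correspondence between closed balls in $(\Stream{A}, d)$ and finite prefixes of streams, which becomes particularly clean because the distance values $\Delta_d = \{2^{-n} : n \in \Nat \cup \{\infty\}\}$ are discrete. First I would observe that for any $n \in \Nat$, the closed ball $\BC{f}{2^{-n}}$ consists exactly of those streams $g$ with $v(f-g) \geq n$, i.e., of those streams agreeing with $f$ on the coefficients $[X^0]f,\ldots,[X^{n-1}]f$. Hence each nonempty ball of positive radius is a cylinder set determined by a finite prefix $p \in A^n$, while the singleton case $\{f\}$ corresponds to the limiting ``infinite prefix''. Moreover, since there are no distance values strictly between $2^{-(n+1)}$ and $2^{-n}$, every open ball coincides with a closed ball, so it suffices to reason with closed balls.

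Given a chain ${\mathcal C} = (B_k)_{k \in \Nat}$ with $B_0 \supseteq B_1 \supseteq \ldots$ of nonempty balls, I would translate ${\mathcal C}$ into an ascending chain of prefixes $p_0 \sqsubseteq p_1 \sqsubseteq \ldots$ under the prefix order on $A^*$, where $p_k$ is the defining prefix of $B_k$; the nesting $B_{k+1} \subseteq B_k$ forces $p_k \sqsubseteq p_{k+1}$ because any element of $B_{k+1}$ must also extend $p_k$. Two cases then arise. If the lengths $|p_k|$ are bounded, the prefix chain stabilizes at some $p_K$ for all $k \geq K$, so $B_k = B_K$ eventually, and $B_K$ is nonempty by hypothesis; hence $\bigcap_k B_k = B_K \neq \emptyset$.

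Otherwise $|p_k| \to \infty$, and the union $\bigcup_k p_k$ defines a unique stream $f^\ast \in \Stream{A}$ by setting $[X^j]f^\ast \Defeq [X^j]p_k$ for any $k$ with $|p_k| > j$ (consistency is guaranteed by $p_k \sqsubseteq p_{k+1}$). By construction $f^\ast$ extends every $p_k$, so $f^\ast \in B_k$ for all $k$, and therefore $f^\ast \in \bigcap_k B_k$. In both cases the intersection is nonempty, which is spherical completeness.

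The argument is essentially a direct unpacking of the prefix ultrametric, so I do not expect a real technical obstacle; the one point worth care is the bookkeeping that shows closed balls in $\Stream{A}$ really are cylinder sets (using the discreteness of $\Delta_d$ so that radii may be taken in $\{2^{-n}\}$ without loss of generality). Once that correspondence is established, the proof reduces to the König-style observation that a nested sequence of cylinder sets is either eventually constant or converges to a single infinite stream.
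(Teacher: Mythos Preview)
Your argument is correct and is precisely the direct verification the paper has in mind: the paper does not actually prove the lemma but only remarks that it ``can easily be checked from the definition of spherical completeness,'' and your cylinder-set/prefix translation is the natural way to carry out that check. The only cosmetic point is that spherical completeness is stated for arbitrary chains of balls, but since nested balls of equal radius coincide in an ultrametric and the value set $\Delta_d$ is countable, every chain has at most countably many distinct members, so your reduction to a sequence $(B_k)_{k\in\Nat}$ is without loss of generality.
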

As a consequence, $(\Stream{A}, d)$ is also Cauchy complete, and,
indeed, $\Stream{A} \simeq \Power{A}{X}$ is the Cauchy completion of the polynomials $\Poly{A}{X}$ for the prefix metric $d$\@.
Let ${\mathcal I}$ be a nonempty index set,
$A_{\iota}$ a set of values for each index $\iota \in {\mathcal I}$, 
and $\Product{\iota \in {\mathcal I}}{\Stream{A_{\iota}}}$ the set of ${\mathcal I}$-indexed product of formal power series.
The valuation 
    \begin{align}\label{valuation.product}
    |\bar{f}|_{{\mathcal I}} &\Defeq \sup_{\iota \in {\mathcal I}}     |f_{\iota}|
    \end{align}
for products $\bar{f}$ of the form $ (f_{\iota})_{\iota \in {\mathcal I}}$ is 
just the supremum of the valuation of its components.

In particular, the valuation of finite dimensional products of the form
$(f_1, \ldots, f_n)$, for $n \in \Nat$, is
$2^{-k}$, where $k \in \Nat \Union \{\infty\}$ is the maximal position such that all prefixes $\Prefix{f_i}{k}$, for $i = 1, \ldots, n$, only contain zeros. 
Likewise, when interpreting $\bar{f} \in \Product{\iota \in {\mathcal I}}{\Stream{A_{\iota}}}$
as a (dependent) function with domain ${\mathcal I}$
and codomains $\Stream{A_{\iota}}$ for $\iota \in {\mathcal I}$, 
then the valuation $|\bar{f}|_{\mathcal I}$ is obtained
as
$\sup_{g \in \bar{f}({\mathcal I})} |g|$\@. 

A metric on the ${\mathcal I}$-indexed product space 
$\Product{\iota \in {\mathcal I}}{\Stream{A_{\iota}}}$ is induced
by the valuation (\ref{valuation.product})\@.
   \begin{align}
    d_{\mathcal I}(\bar{f}, \bar{g}) \Defeq |\bar{f} - \bar{g}|_{\mathcal I} 
   \end{align}
\begin{lemma}\label{lem:product.ultrametric.complete}
For nonempty ${\mathcal I}$ and $A_\iota$ for each $\iota \in {\mathcal I}$,
the space
$(\Product{\iota \in {\mathcal I}}{\Stream{A_{\iota}}, d_{{\mathcal I}}})$ is ultrametric and spherically complete.
\end{lemma}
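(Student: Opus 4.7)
The plan is to reduce both claims to the corresponding facts for each factor $(\Stream{A_\iota}, d)$, leveraging the fact that $d_{\mathcal I}$ is the supremum metric.

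First, for the ultrametric property, positivity and symmetry of $d_{\mathcal I}$ are immediate from the definitions of $|\cdot|_{\mathcal I}$ and the componentwise subtraction. For the strong triangle inequality, I would fix $\bar f, \bar g, \bar h$ in the product and observe that for each coordinate $\iota$, the strong triangle inequality on $\Stream{A_{\iota}}$ yields
\begin{equation*}
|f_\iota - h_\iota| \;\leq\; \max\bigl(|f_\iota - g_\iota|,\, |g_\iota - h_\iota|\bigr) \;\leq\; \max\bigl(d_{\mathcal I}(\bar f,\bar g),\, d_{\mathcal I}(\bar g,\bar h)\bigr)\mbox{\@.}
\end{equation*}
Taking the supremum over $\iota$ on the left-hand side gives $d_{\mathcal I}(\bar f,\bar h) \leq \max(d_{\mathcal I}(\bar f,\bar g),\, d_{\mathcal I}(\bar g,\bar h))$, as desired.

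The key observation for spherical completeness is that balls in the sup-metric product factorize. Specifically, because $d_{\mathcal I}(\bar f, \bar g) = \sup_{\iota} |f_\iota - g_\iota|$, the closed ball of radius $r$ centered at $\bar g$ is
\begin{equation*}
\BC{\bar g}{r} \;=\; \Product{\iota \in {\mathcal I}}\, \BC{g_\iota}{r}\mbox{\@,}
\end{equation*}
and analogously for open balls. (In an ultrametric space every element of a ball is a center, so such a representation is unambiguous.) Now let $(B_n)_{n \in \Nat}$ be a chain of balls with $B_0 \supseteq B_1 \supseteq \ldots$ in $\Product{\iota \in {\mathcal I}}{\Stream{A_{\iota}}}$, and write $B_n = \Product{\iota}{C_{\iota}^{(n)}}$ for balls $C_{\iota}^{(n)}$ in $\Stream{A_\iota}$. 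Since $B_{n+1} \subseteq B_n$ and the factors are nonempty, projection gives $C_\iota^{(n+1)} \subseteq C_\iota^{(n)}$ for every $\iota$; hence for each fixed $\iota$ the chain $(C_\iota^{(n)})_{n \in \Nat}$ is a decreasing chain of balls in $\Stream{A_{\iota}}$.

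By Lemma~\ref{lem:streams.spherically.complete}, each such factor chain has nonempty intersection, and choosing $f^*_\iota \in \INTER_{n \in \Nat} C_\iota^{(n)}$ for every $\iota \in {\mathcal I}$ produces a point $\bar f^* \Defeq (f^*_\iota)_{\iota \in {\mathcal I}} \in \INTER_{n \in \Nat} B_n$, which establishes spherical completeness. The only mildly subtle step is the ball factorization; once that is in hand, the rest is a straightforward componentwise application of the factor result, and no delicate estimate about radii or Cauchy convergence is needed.
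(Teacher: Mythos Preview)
The paper states this lemma without proof, so there is nothing to compare against directly; your componentwise reduction to Lemma~\ref{lem:streams.spherically.complete} is the natural argument and is correct.

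One small remark: the aside ``and analogously for open balls'' is not needed for the spherical-completeness argument, and for a general sup-metric product over an infinite index set the open-ball factorization can fail (the supremum of infinitely many values each strictly below $r$ may equal $r$). It happens to hold here because the value set $\Delta_d = \{2^{-n} : n \in \Nat\} \cup \{0\}$ has no accumulation point other than $0$, so every supremum is attained; but since the chain argument only needs the closed-ball factorization $\BC{\bar g}{r} = \Product{\iota}{\BC{g_\iota}{r}}$, you could simply drop that clause.
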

Since $d_{{\mathcal I}}$ specializes for a singleton index set to the distance $d$ on streams, we usually drop the subindex ${\mathcal I}$\@.

\section{Transformers}\label{sec:transformers}

Stream transformers are the basic building blocks for modeling (discrete) dynamical systems, say $T$, with a vector $\V{x}$ of  $n$ input
streams  and a vector $\V{y}$ of $m$ output streams\@.
We write $\StreamN{A}{n}$ for the set $(\Stream{A})^n$ of
$n$-dimensional vectors of $A$-valued streams. 

\begin{definition}[Stream Transformers]\label{def:stream.transformers}
  A {\em stream transformer}
  is a vector- and multivalued map $T: \StreamN{A}{n} \to \Powerset{{\StreamN{B}{m}}}$, where $n, m \geq 1$\@.
  If $|T(f)| = 1$ (componentwise) for all $f \in (\StreamN{A}{n})$ then $T$ 
 is a {\em deterministic} stream transformer; otherwise the
 stream transformer is said to be {\em nondeterministic}\@. 
 For a deterministic stream transformer we also write $T: {\StreamN{A}{n}} \to {\StreamN{B}{m}}$\@. 
\end{definition}
The direct image of a stream transformer $T$ with respect to a set $F$ of streams is denoted by $T(F)$\@. 

The restriction to vector-valued stream transformers is mainly motivated by notational convenience, as most of the developments in this paper generalize to transformers with heterogenous coefficient sets of the form
   $$(\Stream{A_1} \Cross \ldots \Cross \Stream{A_n}) \to 
           \Powerset{\Stream{B_1} \Cross \ldots \Cross \Stream{B_m}}\mbox{\@,}$$
and also to infinite products. 
Vector-valued stream transformers ${\StreamN{A}{n}} \to \Powerset{{\StreamN{B}{m}}}$ may be identified with stream transformers in $\Stream{(A^n)} \to \Powerset{\Stream{(B^m)}}$, because of the one-to-one relationship of $\Stream{(A^n)}$ with $\StreamN{A}{n}$, and
$\Stream{(B^m)}$ with $\StreamN{B}{m}$\@.
If $A$ ($B$) is a field then $A^n$ ($B^m$) can be made into a field in the usual way.


\begin{example}~\label{ex:stream-circuits}
{\em Stream circuits}~\cite{rutten2008rational} 
are clocked, hardware-like finite structures, 
which are obtained by (finite) compositions of the
(rational) stream transformers 
of the form ($r \in \Real$)
   \begin{align*}
   M_r(z) &\Defeq r \cdot z\\
   A(z_1, z_2) &\Defeq z_1 + z_2 \\
   C(z) &\Defeq (z, z)^\tau\\
   D_1(z) &\Defeq X \cdot z\mbox{\@,}
   \end{align*}  
for multiplying a stream by a constant, adding two streams, copying, 
and delaying a stream. 
Such a stream circuit is visualized
in Figure~\ref{fig:stream-circuit-1}\@. 
\end{example}

 \begin{figure}[t]
    \centering
    \begin{center}
    \tikzstyle{int}=[draw, fill=blue!20, minimum size=3em]
    \tikzstyle{comp}=[ minimum size=1em]
    \tikzstyle{init} = [pin edge={to-,thin,black}]
    \begin{tikzpicture}[node distance=2.5cm,auto,>=latex']
    \node [coordinate] (input) {input};
    \node [int] (a) [right of=input, node distance=2cm] {$A$};
    \node [comp] (h1) [right of=a, node distance=2cm] {$\stackrel{h_3}{\circ}$};
    \node [int] (c) [right of=h1, node distance=2cm] {$C$};
    \node [comp] (h2) [above of=a, node distance=2cm] {$\stackrel{h_1}{\circ}$};
    \node [int] (d1) [right of=h2, node distance=2cm] {$D_1$};    
    \node [comp] (h3) [right of=d1, node distance=2cm] {$\stackrel{h_2}{\circ}$};
    \node [coordinate] (output) [right of=c, node distance=2cm]{};
    \path[->] (input) edge node {$z$} (a);
    \path[->] (a) edge node  {} (h1);
    \path[->] (h1) edge node  {} (c);
    \path[->] (c) edge node {$y$} (output);
     \path[->] (c) edge node  {} (h3);
     \path[->] (h3) edge node  {} (d1);
     \path[->] (d1) edge node  {} (h2);
     \path[->] (h2) edge node  {} (a);
    \end{tikzpicture}
    \end{center}
    \caption{Finite Stream Circuit. \label{fig:stream-circuit-1}}
 \end{figure}
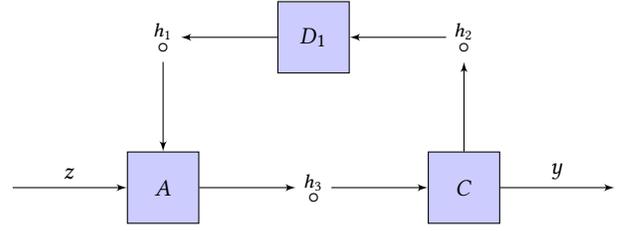

\begin{example}
We can systematically analyze stream circuits, such as the one in Figure~\ref{fig:stream-circuit-1}, by solving the underlying system of stream equations 
    $$
        h_1 = X \cdot h_2~~~
        h_3 = z + h_1~~~
        h_2 = h_3~~~
        y = h_3\mbox{\@,}
    $$
for the output stream $y$ to obtain
   $
      y = (\nicefrac{1}{1 - X}) \cdot z 
  $\@.
Thus, the (rational) stream transformer $\Stream{\Real} \to \Stream{\Real}$ that is 
implemented by the feedback circuit in Figure~\ref{fig:stream-circuit-1} is given by
    $
    z \mapsto (\nicefrac{1}{1 - X}) \cdot z
    $\@.
Since $\nicefrac{1}{1 - X} = (1, 1, 1, \ldots)$, the output stream
$y$ is, by stream multiplication~(\ref{multiply}),
of the form $(\sum_{k=0}^n [X^k]z)_{n \in \Nat}$\@.
\end{example}

\paragraph{Composing Transformers.}
System design is based on gradually composing or decomposing stream transformers.
\begin{example}[Functional Composition]\label{ex:stream.transformers-2}
A formal power series  may also 
be viewed as a stream transformer $S(X)$\@, where $X$ 
is now instantiated with an input stream, say, $T(X)$ by the "composition" $S(T(X))$\@. 
Formally,  the {\em functional composition}  $T \Comp S$ 
of two power series 
   $S(X) \Defeq \sum_{k \geq 0} a_k X^k$ 
and 
   $T(X) \Defeq \sum_{k \geq 1} b_k X^k$, that is, $b_0 = 0$\@,
is a power series with coefficients
    \begin{align}\label{def:comp}
        c_k &\Defeq \sum_{n \geq 0}\sum_{j_1 + \ldots + j_n = k} b_n a_{j_1} \ldots a_{j_n}\mbox{\@.}
    \end{align}
%
\end{example}
Inverses for the functional composition (\ref{def:comp}) of two streams
exist under certain conditions, and may be used for {\em reverse} computation. 

\begin{definition}\label{def:basic}
For suitable  
stream transformers $S$, $T$\@:
    \begin{align*}
    \Magic     &\Defeq (\lambda {\_}){\StreamN{A}{n}} \\
    \Abort     &\Defeq (\lambda {\_}){\emptyset} \\
     S ; T      &\Defeq  (\lambda {f})\, T(S(f)) \\
     S \Par T &\Defeq  (\lambda f)\, (T(f), S(f)) \\
    S \Angelic T &\Defeq  (\lambda {f})\,{S(f) \Union T(f)} \\
    S \Demonic T &\Defeq  (\lambda {f})\, {S(f) \Inter T(f)}\mbox{\@.}
    \end{align*}
$S; T$ is {\em sequential composition},
$S \Par T$ is {\em parallel composition},
$S \Angelic T$ realizes {\em angelical} and $S \Demonic T$ {\em demonic} 
nondeterminism, which is possibly unbounded.
\end{definition}
These definitions work for arbitrary vector- and 
multivalued maps $S$, $T$ for which the composition $T(S(f))$ is well-defined
(cmp. Example~\ref{ex:stream.transformers-2})\@.

\paragraph{Refinement.}
A  nondeterministic stream transformer $S$ is a {\em refinement} of another nondeterministic stream transformer $T$, 
written $S \preceq T$ if $S$ is {\em more deterministic} than $T$; that is:
   \begin{align}
   S \preceq T &\Defeq \Forall{f\in \Stream{A}}{S(f) \subseteq T(f)}\mbox{\@,}
   \end{align}
where $\subseteq$ is interpreted component-wise.
Clearly, $\preceq$ is a partial order, and
refinement is compatible with composition operators.
\begin{lemma}[Compatibility]
Let $S_1 \preceq T_1$, $S_2 \preceq T_2$; then:
   \begin{enumerate}
       \item $S_1; S_2 \preceq T_1; T_2$
       \item $S_1 \Par S_2 \preceq T_1 \Par T_2$
       \item $S_1 \Angelic S_2 \preceq T_1 \Angelic T_2$
       \item $S_1 \Demonic S_2 \preceq T_1 \Demonic T_2$
    \end{enumerate}
\end{lemma}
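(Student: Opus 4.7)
The plan is to unfold each composition operator from Definition~\ref{def:basic} pointwise and derive the componentwise inclusion $(\text{LHS})(f) \subseteq (\text{RHS})(f)$ from the assumptions $S_1(f) \subseteq T_1(f)$ and $S_2(f) \subseteq T_2(f)$. Fix an arbitrary input $f$ throughout.

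For the angelic and demonic cases, the argument is immediate from the monotonicity of $\cup$ and $\cap$ under inclusion:
\[
(S_1 \Angelic S_2)(f) = S_1(f) \cup S_2(f) \subseteq T_1(f) \cup T_2(f) = (T_1 \Angelic T_2)(f),
\]
and analogously for $\Demonic$ with $\cap$ in place of $\cup$. For parallel composition one simply takes the Cartesian product componentwise: since $S_i(f) \subseteq T_i(f)$ for $i=1,2$, we have $(S_2(f), S_1(f)) \subseteq (T_2(f), T_1(f))$ as subsets of the product space, which is precisely $(S_1 \Par S_2)(f) \subseteq (T_1 \Par T_2)(f)$.

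The only case that needs a touch of care is sequential composition, because the direct image $T(F)$ of a set of streams is used implicitly. First I would observe that for any vector- and multivalued map $U$, the direct image is monotone: $F_1 \subseteq F_2$ implies $U(F_1) \subseteq U(F_2)$ (this is immediate from the definition $U(F) = \bigcup_{g \in F} U(g)$). Then I would chain two inclusions: from $S_2 \preceq T_2$ (applied pointwise to each element of $S_1(f)$) we get $S_2(S_1(f)) \subseteq T_2(S_1(f))$, and from $S_1 \preceq T_1$ together with monotonicity of the direct image of $T_2$ we get $T_2(S_1(f)) \subseteq T_2(T_1(f))$. Combining yields $(S_1;S_2)(f) = S_2(S_1(f)) \subseteq T_2(T_1(f)) = (T_1;T_2)(f)$.

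The main (mild) obstacle is just making the direct-image convention for multivalued composition explicit, since Definition~\ref{def:basic} writes $T(S(f))$ where $S(f)$ is already a set; once that is stated, all four parts reduce to one-line set-theoretic calculations, and quantifying over $f \in \StreamN{A}{n}$ finishes each part.
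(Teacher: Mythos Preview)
Your proof is correct. The paper states this lemma without proof, treating all four parts as elementary consequences of the definitions; your argument spells out exactly those set-theoretic verifications, including the one nontrivial point (making the direct-image convention in $T(S(f))$ explicit for sequential composition), so there is nothing to compare against beyond noting that your approach is the intended straightforward unfolding.
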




\section{Causality}\label{sec:causality}
We restrict our investigation to stream transformers whose outputs are determined by their history of inputs.
The {\em prefix} $\Prefix{f}{n}$ of length $n \in \Nat$ for a stream $f$ is the finite word $([X^0]f) \ldots ([X^{n-1}]f)$ of the first $n$ coefficients of $f$\@. 
Now, for a set $F$ of streams,  $\Prefix{F}{n}$  denotes 
the set of prefixes 
$\Prefix{f}{n}$ for $f \in F$\@. 
Similarly, $[X^k]F$ denotes the set $\Setc{[X^k]f}{f \in F}$\@.
Causal stream transformers, at least in the single-valued case, are discussed in~\cite{broy2012specification}\@. 

\begin{definition}[Causal Stream Transformers]\label{def:causal.stream-transformer}
The nondeterministic stream transformer $T: \StreamN{A}{n} \to \Powerset{\StreamN{B}{m}}$ 
is {\em $\delta$-causal}, for $\delta \in \Nat$,
if for all $k \in \Nat$ and $f, g \in \StreamN{A}{n}$
   \begin{align}
       \Prefix{f}{k} = \Prefix{g}{k}
          &\text{~implies~} [X^{k + \delta}]T(f) = [X^{k + \delta}]T(g)\mbox{\@.}
   \end{align}
In case $T$ is $0$-causal then $T$ is also said to be {\em (weakly) causal}, and if $T$ is $\delta$-causal for $\delta > 0$ then $T$ is {\em strongly causal}\@. 
\end{definition}
Every $\delta$-causal transformer is also $\delta'$-causal for $\delta' < \delta$\@.
In particular, every strongly causal transformer is also  weakly causal.

An alternative characterization of $\delta$-causality to the one in Definition~\ref{def:causal.stream-transformer} is easily established by natural induction on $k$\@. 
\begin{proposition}
A stream transformer $T$ is $\delta$-causal, for $\delta \geq 0$, if and only if
for all $f, g \in \Stream{A}$ and $k \in \Nat$,
   \begin{align}
    \Prefix{f}{k} = \Prefix{g}{k} &\text{~~implies~~} \Prefix{T(f)}{k + \delta} = \Prefix{T(g)}{k + \delta}\mbox{\@.} 
   \end{align}
\end{proposition}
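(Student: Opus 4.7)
I would prove the equivalence by natural induction on $k$, relying on the elementary fact that prefix equality is monotone in length: if $k_1 \le k_2$ and $\Prefix{f}{k_2} = \Prefix{g}{k_2}$, then $\Prefix{f}{k_1} = \Prefix{g}{k_1}$. For the $(\Rightarrow)$ direction I assume $T$ is $\delta$-causal in the sense of Definition~\ref{def:causal.stream-transformer} and show by induction on $k$ that $\Prefix{f}{k} = \Prefix{g}{k}$ entails $\Prefix{T(f)}{k+\delta} = \Prefix{T(g)}{k+\delta}$. The base case $k = 0$ reduces to showing $\Prefix{T(f)}{\delta} = \Prefix{T(g)}{\delta}$ for all $f, g$, which follows by stitching together the coefficient equalities $[X^j]T(f) = [X^j]T(g)$ for $j < \delta$, each obtained from the definition applied with the vacuously true premise $\Prefix{f}{0} = \Prefix{g}{0}$. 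The inductive step starts from $\Prefix{f}{k+1} = \Prefix{g}{k+1}$: monotonicity yields $\Prefix{f}{k} = \Prefix{g}{k}$, the induction hypothesis delivers agreement of the prefixes of length $k + \delta$, and a further invocation of the definition at index $k$ supplies the coefficient equality at position $k + \delta$, extending the prefix agreement to length $(k+1) + \delta$.

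For the $(\Leftarrow)$ direction, the definition is recovered from the prefix-agreement statement by reading off a single coefficient. Given $\Prefix{f}{k} = \Prefix{g}{k}$, the hypothesis applied at an index large enough so that the resulting common prefix of $T(f)$ and $T(g)$ reaches position $k + \delta$ yields, by projection to that position, the required $[X^{k+\delta}]T(f) = [X^{k+\delta}]T(g)$.

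The main technical obstacle, arising in the inductive step of $(\Rightarrow)$ for multi-valued $T$, is coherence of witnesses: the induction hypothesis produces, for each $h \in T(f)$, some $h_1 \in T(g)$ matching $h$ on the prefix of length $k+\delta$, while the added coefficient equality provides some (perhaps distinct) $h_2 \in T(g)$ matching $h$ at position $k+\delta$; assembling these into a single $h' \in T(g)$ matching $h$ on the whole extended prefix is not automatic and appears to require the set-valued structure of $T$, most likely via a diagonalization or closure/compactness argument on the images of $T$. Once this coherence is in place, the remainder of the proof is a routine bookkeeping exercise on prefix lengths versus coefficient positions.
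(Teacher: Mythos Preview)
Your inductive plan mirrors the paper's own one-line justification (``easily established by natural induction on $k$''), so you are on the intended track. But the concern you flag is not a mere technical obstacle to be patched by compactness --- it is a symptom that, with Definition~\ref{def:causal.stream-transformer} exactly as written, the equivalence actually fails, and in both directions.

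For the multivalued $(\Rightarrow)$ direction your coherence worry is fatal. Take $\delta = 0$ and let $T(f)$ be the two-element set $\{(0,1,0,0,\ldots),\,(1,0,0,0,\ldots)\}$ when $[X^{100}]f = 0$ and $\{(0,0,0,\ldots),\,(1,1,0,0,\ldots)\}$ otherwise. Then $[X^k]T(f)$ equals $\{0,1\}$ for $k \le 1$ and $\{0\}$ for $k \ge 2$, independently of $f$, so Definition~\ref{def:causal.stream-transformer} is satisfied at every $k$; yet for $f,g$ with $\Prefix{f}{2} = \Prefix{g}{2}$ but $[X^{100}]f \ne [X^{100}]g$ one gets $\Prefix{T(f)}{2} = \{01,10\} \ne \{00,11\} = \Prefix{T(g)}{2}$. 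The images here are finite, so no closure or compactness hypothesis can rescue the step. For the $(\Leftarrow)$ direction, your move ``apply the hypothesis at an index large enough to reach position $k+\delta$'' cannot be carried out: from $\Prefix{f}{k} = \Prefix{g}{k}$ the prefix hypothesis is available only at indices $\le k$, which yields agreement on output positions $0,\ldots,k+\delta-1$ but not at position $k+\delta$. Indeed the identity transformer satisfies the prefix condition for $\delta = 0$ yet violates Definition~\ref{def:causal.stream-transformer} already at $k = 0$. (The same index mismatch undermines your base case: the definition at $k=0$ delivers $[X^{\delta}]$, not $[X^j]$ for $j<\delta$.) The underlying issue is an off-by-one slip in Definition~\ref{def:causal.stream-transformer}; the prefix formulation in the Proposition is what the paper actually relies on downstream (e.g.\ in the proof of Lemma~\ref{lem:causal-char-power}), and with that taken as the operative definition there is nothing left to prove.
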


Strongly causal stream transformer are often used to specify clocked, hardware-like finite structures. 
However, when designing a system, it is often preferable to work with the larger class of weakly causal stream transformers~\cite{benveniste1991synchronous}\@. 

\begin{example}\label{ex:register}
A clocked $N$-bit register holding a value belonging to the alphabet 
$A \Defeq \Bool^{N}$ induces  a causal stream transformer
$\Stream{(A \Cross \Bool)} \to \Stream{A}$, 
where the second input $\Bool \Def \{0,1\}$ corresponds 
to an enabling stream for stating whether the incoming value 
on the first component ought to be loaded into the register
or be ignored at a given clock tick\@.
\end{example}

\begin{example}\label{ex:rational.stream.transformers}
The basic stream transformers in Example~\ref{ex:stream-circuits} for constructing stream circuits
are all ($0$-) causal, and $D_1$ is also $1$-causal.
\end{example}

\begin{example}[Consing]
Consider, for $a \in A$, the family 
$Cons_a: \Stream{A} \to \Stream{A}$ of stream transformers, 
which map $z \mapsto a + X \cdot z$\@.
These are $1$-causal transformers. 
$Cons_0$, in particular, yields the unit delay $D_1$ from Example~\ref{ex:rational.stream.transformers}\@. 
\end{example}

\begin{example}
The stream transformer 
which returns the stream $1$ if all elements of its argument stream are positive, and
the stream $0$ otherwise, is not causal.  
\end{example}

%


\paragraph{Mealy Realizability.}
A Mealy machine is an intensional description of a causal stream transformer $T$,
where the state holds enough information to  determine $[X^k]T(f)$ from $[X^k]f$\@.
More precisely, the set of weakly causal stream transformers $\Stream{A} \to \Stream{B}$ is isomorphic to the functions $A^+ \to B$ from finite words over $A$ in $A^+$ to $B$, and a Mealy machine, possibly containing an infinite number of states, can be constructed for realizing such a word level function. 

\paragraph{Composition.} Causality is compositional for the basic
operators on transformer from Definition~\ref{def:basic}\@.
\begin{proposition}\label{prop:causal-accumulates}
Let $S$ be $\delta_S$-causal and $T$ 
be $\delta_T$-causal; then: 
  \begin{enumerate}
      \item $\Magic$, $\Abort$ are weakly causal;
       \item $S ; T$ is $(\delta_S + \delta_T)$-causal;
      \item $S \Par T$ is $\Min{\delta_S}{\delta_T}$-causal;
      \item $S \Angelic T$ and $S \Demonic T$ are $\Max{\delta_S}{\delta_T}$-causal. 
  \end{enumerate}
\end{proposition}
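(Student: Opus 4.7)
The plan is to pass from Definition~\ref{def:causal.stream-transformer} to the equivalent prefix formulation of $\delta$-causality, read set-wise in the multivalued case as
\[
\Prefix{f}{k} = \Prefix{g}{k}\ \text{implies}\ \Prefix{T(f)}{k+\delta} = \Prefix{T(g)}{k+\delta}\mbox{\@,}
\]
with $\Prefix{T(f)}{n} \Defeq \Setc{\Prefix{h}{n}}{h \in T(f)}$. I will also use monotonicity in $\delta$ throughout: truncating a length-$(k+\delta)$ prefix set to any shorter length is a set-level operation that preserves equality, so $\delta$-causality implies $\delta'$-causality for every $\delta' \leq \delta$.

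Item~(1) is immediate, as $\Magic$ and $\Abort$ are constant maps and hence $\delta$-causal for every $\delta \in \Nat$, in particular weakly causal. For item~(2), I would fix $\Prefix{f}{k} = \Prefix{g}{k}$; by $\delta_S$-causality, $\Prefix{S(f)}{k+\delta_S} = \Prefix{S(g)}{k+\delta_S}$ as sets, so to every $h \in S(f)$ there corresponds an $h' \in S(g)$ sharing the same prefix of length $k+\delta_S$. Applying $\delta_T$-causality of $T$ at the fresh index $k+\delta_S$ to such matched pairs yields $\Prefix{T(h)}{k+\delta_S+\delta_T} = \Prefix{T(h')}{k+\delta_S+\delta_T}$, and since $T(S(f)) = \UNION_{h \in S(f)} T(h)$, taking unions in $h$ together with the symmetric argument for $h' \in S(g)$ delivers the required prefix-set equality at length $k+\delta_S+\delta_T$.

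For items~(3) and~(4) the common ingredient is that prefix extraction commutes with the outer set-level construction:
\begin{align*}
\Prefix{T(f) \Cross S(f)}{n} &= \Prefix{T(f)}{n} \Cross \Prefix{S(f)}{n}\mbox{\@,} \\
\Prefix{S(f) \Union T(f)}{n} &= \Prefix{S(f)}{n} \Union \Prefix{T(f)}{n}\mbox{\@,} \\
\Prefix{S(f) \Inter T(f)}{n} &\subseteq \Prefix{S(f)}{n} \Inter \Prefix{T(f)}{n}\mbox{\@.}
\end{align*}
Combining these identities with the individual causalities of $S$ and $T$ and downgrading both, via monotonicity, to the common causality constant prescribed by each construction, one reads off $\Min{\delta_S}{\delta_T}$-causality for $\Par$ (both components must be simultaneously determined) and $\Max{\delta_S}{\delta_T}$-causality for $\Angelic$ and $\Demonic$.

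The main obstacle is item~(4) for $\Demonic$: the prefix-intersection identity fails in one direction, since a common length-$n$ prefix witnessed separately in $S(f)$ and $T(f)$ by distinct tails need not come from any element of $S(f) \Inter T(f)$. Recovering the reverse inclusion requires extra structure, e.g.\ closedness or compactness of $S(f)$ and $T(f)$ in the prefix ultrametric of streams, in line with the standing hypotheses invoked later in the paper's fixpoint sections. Once this one point is settled, the remainder is routine bookkeeping of prefix lengths against the declared delays $\delta_S$ and $\delta_T$.
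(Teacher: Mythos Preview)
The paper states this proposition without proof, so there is no reference argument to compare against; your sketches for items~(1)--(3) are the natural ones and are sound. The genuine gap is in item~(4). Monotonicity lets you \emph{decrease} the causality parameter only: a $\delta$-causal map is $\delta'$-causal for $\delta' \leq \delta$, never for larger $\delta'$. Hence the only common level to which you can ``downgrade'' both $S$ and $T$ is $\min(\delta_S,\delta_T)$, and your union identity then delivers $\min(\delta_S,\delta_T)$-causality of $S \Angelic T$, not the asserted $\max$. The $\max$ claim is in fact false: with $S$ the $1$-causal unit delay $f \mapsto \{X\cdot f\}$ and $T$ the $0$-causal identity $f \mapsto \{f\}$, one has $\Prefix{(S\Angelic T)(f)}{1} = \{0,\,[X^0]f\}$, which varies with $[X^0]f$ although all inputs share the empty $0$-prefix; thus $S\Angelic T$ is not $1$-causal. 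This is consistent with Proposition~\ref{prop:angelic.lipschitz.closed}, whose Lipschitz constant $\max(l_S,l_T)$ corresponds via $l = 2^{-\delta}$ (Lemma~\ref{lem:causal-char-power}) to $\min(\delta_S,\delta_T)$-causality, so item~(4) as printed is evidently a $\max$/$\min$ slip that you should not try to prove as stated.

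Your diagnosis of $S\Demonic T$ is right in spirit, but the proposed remedy does not work: the prefix--intersection defect is not cured by closedness or compactness of images, and it already blocks weak causality. Take $S(f)=\{0\}$ constant and $T(f)=\{\Bar{a}\}$ with $a = [X^0]f$; both are deterministic, weakly causal in the prefix sense, and have singleton (hence compact) images, yet $(S\Demonic T)(f)$ equals $\{0\}$ when $[X^0]f=0$ and $\emptyset$ otherwise. Its length-$0$ prefix set therefore toggles between $\{\epsilon\}$ and $\emptyset$ for inputs with identical (empty) $0$-prefix, so $S\Demonic T$ fails even $0$-causality. Some compatibility hypothesis relating $S$ and $T$, not a topological condition on their images, is what is missing for $\Demonic$.
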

As a consequence, $\delta$-causal stream transformers, for $\delta > 0$, 
are not preserved under composition, which results in 
an unfortunate inflexibility in system design.
For instance, if we want to represent a stream transformer $T$
as a sequential composition $T_1; T_2$ of two $1$-causal stream transformers then we always have to accept a delay by at least two.
Both weakly and strongly causal stream transformers, however, 
are preserved, as a consequence of Proposition~\ref{prop:causal-accumulates} under sequential and parallel composition. 
\begin{lemma}
For weakly (strongly) causal stream transformers $S$, $T$ (with suitable domains and codomains)
$S; T$, $S \Par T$, $S \Demonic T$, and $S \Angelic T$  are weakly (strongly) causal\@. 
\end{lemma}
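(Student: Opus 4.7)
The plan is to read this off directly from Proposition~\ref{prop:causal-accumulates}. The key observation is that weak and strong causality are just sign conditions on the delay parameter $\delta \in \Nat$: a transformer is weakly causal precisely when it is $\delta$-causal for some $\delta \geq 0$, and strongly causal precisely when it is $\delta$-causal for some $\delta \geq 1$ (equivalently, $\delta > 0$, since $\delta$ ranges over $\Nat$). Since each of the arithmetic operations $+$, $\min$, and $\max$ appearing in the delay bounds of Proposition~\ref{prop:causal-accumulates} preserves both non-negativity and positivity on $\Nat$, the four composites $S;T$, $S \Par T$, $S \Demonic T$, $S \Angelic T$ will inherit the sign of the delay from their constituents.

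More concretely, I would first fix witnesses $\delta_S, \delta_T \in \Nat$ for the causality of $S$ and $T$, respectively, and apply Proposition~\ref{prop:causal-accumulates} to obtain the delay bounds $\delta_S + \delta_T$ for $S;T$, $\min(\delta_S,\delta_T)$ for $S \Par T$, and $\max(\delta_S,\delta_T)$ for each of $S \Demonic T$ and $S \Angelic T$. In the weakly causal case, $\delta_S, \delta_T \geq 0$ immediately yields that each of these three arithmetic bounds is $\geq 0$, so every composite is weakly causal. In the strongly causal case, $\delta_S, \delta_T \geq 1$ gives $\delta_S + \delta_T \geq 2$, $\min(\delta_S,\delta_T) \geq 1$, and $\max(\delta_S,\delta_T) \geq 1$, so every composite is strongly causal.

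There is no genuine obstacle here, as the lemma is essentially a qualitative restatement of the quantitative Proposition~\ref{prop:causal-accumulates}. The only place a reader might hesitate is the parallel composition $S \Par T$, whose delay is bounded only by $\min(\delta_S,\delta_T)$ and can therefore be strictly smaller than either of $\delta_S$, $\delta_T$ individually; but since the minimum of two positive naturals is still positive, strong causality is preserved. The ``suitable domains and codomains'' proviso is read as the well-typedness requirement implicit in Definition~\ref{def:basic}, in particular that the output arity of $S$ match the input arity of $T$ for the sequential composition $S;T$.
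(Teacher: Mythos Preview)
Your proposal is correct and matches the paper's approach exactly: the paper also presents this lemma as an immediate consequence of Proposition~\ref{prop:causal-accumulates}, with no further argument. Your explicit verification that $+$, $\min$, and $\max$ preserve positivity on $\Nat$ is precisely the arithmetic content needed, and your remark about $\min$ for $S \Par T$ is apt.
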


\paragraph{Fixpoints.}

Strongly causal stream transformers have a unique fixpoint. 
This can be shown, for example, by constructing 
recursive Mealy machines whose transitions may also depend on outputs~\cite{pradic2020some}\@.
In contrast, fixpoints for weakly causal transformers may not exist and they may not be unique.
\begin{example}~\label{ex:causal-nonfix}
\begin{enumerate}
    \item $Succ(z) \Def z + 1$ has no fixpoint in $\Stream{\Real}$;
    \item Every stream in $\Stream{A}$ is a fixpoint of the identity transformer $Id(z) \Def z$\@. 
\end{enumerate}
\end{example}
Synchronous languages such as Lustre~\cite{halbwachs2012synchronous} or Esterel~\cite{berry1992esterel} therefore 
use syntactic restrictions for avoiding causal loops.
In contrast, the approach taken in the {\em symbolic analysis laboratory} SAL~\cite{bensalem2000overview} 
relies on {\em verification conditions} asserting the absence of causal loops.
\begin{example}\label{ex:sal}
Let $c$, $d$, $e$ be Boolean-valued streams.
Then a solution $(f, g) \in \Stream{(\Bool \Cross \Bool)}$ of
the stream equalities\footnote{
$\ITE{c}{f}{g}$ denotes the (pointwise) conditional stream "if $c$ then $f$ else $g$"\@.
}
  \begin{align*}
   f &= \ITE{c}{\lnot{g}}{d} \\
   g &= \ITE{c}{e}{f}\mbox{\@}
  \end{align*}
is a fixpoint of 
   \begin{align*}
   T(f, g) &\Def \ITE{c}{(\lnot{g}, e)}{(d, f)}\mbox{\@.}
   \end{align*}
There is no causal loop, because $f$ is causally dependent on $g$ only when $c$ is true, and vice versa only when it is false. 
Therefore, these stream equations are acceptable in SAL.
\end{example}
The approach of SAL of semantically characterizing causality errors is more general than the synctactic restrictions, say, in Lustre.
But it is also undecidable in general, since it can depend on arbitrary data properties.


\section{Contraction}\label{sec:contractions}

We develop metric-based characterizations both of weakly and of strongly causal stream transformers.

\subsection{Multivalued Contraction}

\begin{definition}[Contractions for Multivalued Maps]\label{def:contractions}
Let $(M, d_M)$, $(M, d_N)$ be metric spaces\@. 
A multivalued map $T: M \to \Powerset{N}$ is said to be an 
{\em $l$-contraction}, for $0 \leq l \leq 1$, if
for all $x, y \in M$ and for all $u \in T(x)$ 
there exists $v \in T(g)$ such that 
   \begin{align}\label{contraction-condition}
   d_M(u, v) &\leq l \cdot d_N(x, y)\mbox{\@.} 
   \end{align}
If $l < 1$ then the multivalued $T$ is said to be {\em (strictly) contractive}, and if $l = 1$ then $T$ is {\em nonexpansive}\@. 
\end{definition}
Contraction for multivalued maps $x \mapsto \{x\}$  coincides with the usual notion of contraction on singlevalued 
maps as stated, for example, in Section~\ref{sec:preliminaries}\@. 

\begin{lemma}\label{lem:causal-char-power}
A stream transformer $T: \StreamN{A}{n} \to \Powerset{\StreamN{A}{m}} \setminus \emptyset $
is $\delta$-causal, for $\delta \geq 0$, if and only if it is $2^{-\delta}$-contractive. 
\end{lemma}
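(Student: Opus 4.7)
The plan is to exploit the elementary correspondence $d(f, g) \leq 2^{-k}$ iff $\Prefix{f}{k} = \Prefix{g}{k}$, which is immediate from $d(f, g) = 2^{-\Ord{f-g}}$ together with the fact that $\Ord{f-g} \geq k$ precisely captures agreement on the first $k$ coefficients. Under this dictionary, the lag of $\delta$ positions in the causality definition translates to the contraction factor $2^{-\delta}$ on distances. I would work throughout with the prefix form of $\delta$-causality stated in the earlier proposition, namely $\Prefix{f}{k} = \Prefix{g}{k}$ implies $\Prefix{T(f)}{k+\delta} = \Prefix{T(g)}{k+\delta}$.

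For the forward implication ($\delta$-causal implies $2^{-\delta}$-contractive), fix $f, g \in \StreamN{A}{n}$ and $u \in T(f)$. Set $n_0 \Defeq \Ord{f-g}$, so $d(f, g) = 2^{-n_0}$ and $\Prefix{f}{n_0} = \Prefix{g}{n_0}$; the case $f = g$ (where $n_0 = \infty$ and $d(f, g) = 0$) is handled by taking $v \Defeq u \in T(f) = T(g)$. Otherwise, by $\delta$-causality we have $\Prefix{T(f)}{n_0+\delta} = \Prefix{T(g)}{n_0+\delta}$; since $T(g)$ is nonempty (this is where the codomain hypothesis enters), we may pick $v \in T(g)$ whose length-$(n_0+\delta)$ prefix coincides with that of $u$, giving $d(u, v) \leq 2^{-(n_0+\delta)} = 2^{-\delta} \cdot d(f, g)$, as required.

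For the reverse implication, assume $T$ is $2^{-\delta}$-contractive and $\Prefix{f}{k} = \Prefix{g}{k}$, so $d(f, g) \leq 2^{-k}$. Given any $u \in T(f)$, contractivity furnishes $v \in T(g)$ with $d(u, v) \leq 2^{-\delta} \cdot d(f, g) \leq 2^{-(k+\delta)}$, i.e., $\Prefix{u}{k+\delta} = \Prefix{v}{k+\delta}$. Hence $\Prefix{T(f)}{k+\delta} \subseteq \Prefix{T(g)}{k+\delta}$, and the reverse inclusion follows by the symmetric argument, using contractivity with $f$ and $g$ swapped. Equality of the prefix sets is exactly the prefix form of $\delta$-causality. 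Once the metric/prefix dictionary is in place and one has the prefix characterization of causality, the only real care needed is the extraction of a concrete witness $v$ from set-level prefix equality in the forward step; the rest is bookkeeping on the exponent identity $2^{-(k+\delta)} = 2^{-\delta} \cdot 2^{-k}$.
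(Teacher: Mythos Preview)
Your proof is correct and follows essentially the same approach as the paper's: both directions translate between the prefix characterization of $\delta$-causality and the contraction inequality via the dictionary $d(f,g) \leq 2^{-k} \iff \Prefix{f}{k} = \Prefix{g}{k}$. Your version is arguably a bit more careful than the paper's in that you handle the $f=g$ case explicitly, note where nonemptiness of $T(g)$ is used, and in the reverse direction work directly with an arbitrary $k$ rather than the maximal one.
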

\begin{proof}
Let $f, g \in \Stream{A}$  and let $k \in \Natinf$ such that $d(f, g) = 2^{-k}$, 
and therefore $\Prefix{f}{k} = \Prefix{g}{k}$ ($\Prefix{h}{\infty} \Defeq h$)\@.
\begin{enumerate}
\item ($\Rightarrow$) 
     By assumption, $T$ is $\delta$-causal, and therefore
            $$\Prefix{T(f)}{k + \delta}  = \Prefix{T(g)}{k + \delta}\mbox{\@.}$$ 
        Consequently, for all $u \in T(f)$
        there exists $v \in T(g)$ with
             $\Prefix{u}{k + \delta}  = \Prefix{v}{k + \delta}$,
        that is $d(u, v) \leq 2^{-(k+\delta)} =  2^{-\delta} \cdot d(f, g)$\@. 
\item ($\Leftarrow$)
     By assumption, for all $u \in T(f)$ there exists $v \in T(g)$ such that
            $$d(u, v) \leq 2^{-\delta} \cdot d(f, g) = 2^{-(k + \delta)}\mbox{\@,}$$
     and therefore $\Prefix{T(f)}{k + \delta} \subseteq \Prefix{T(g)}{k + \delta}$\@. 
     Similarly, also $\Prefix{T(g)}{k + \delta} \subseteq \Prefix{T(f)}{k + \delta}$ holds.  
\end{enumerate} 
This finishes the proof.
\end{proof}
From Lemma~\ref{lem:causal-char-power} one obtains metric-based 
characterizations for weak and also for strong causality.
\begin{corollary}[Causality-Contraction Connection]\label{thm:causal-equiv-contractive}
A nondeterministic stream transformer $T: \StreamN{A}{n} \to \Powerset{\StreamN{B}{m}} \setminus \emptyset$ is:
   \begin{enumerate}
       \item Weakly causal if and only if it is nonexpansive.
       \item Strongly causal if and only if it is contractive.
   \end{enumerate}
\end{corollary}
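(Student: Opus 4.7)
The plan is to obtain both equivalences as essentially immediate consequences of Lemma~\ref{lem:causal-char-power}, which identifies $\delta$-causality with $2^{-\delta}$-contractivity, combined with a small discreteness observation for the converse of item (2).

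For item (1), I would simply instantiate Lemma~\ref{lem:causal-char-power} at $\delta = 0$: \emph{weakly causal} is by Definition~\ref{def:causal.stream-transformer} synonymous with $0$-causal, while $2^{0} = 1$-contractive is precisely the nonexpansion condition of Definition~\ref{def:contractions}. For item (2), the forward direction is equally direct: if $T$ is $\delta$-causal for some $\delta \geq 1$, then the lemma yields $2^{-\delta}$-contractivity with $2^{-\delta} \leq \tfrac{1}{2} < 1$, so $T$ is (strictly) contractive.

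The converse of item (2) is the only step that requires a short additional argument, because an arbitrary Lipschitz constant $l < 1$ need not equal $2^{-\delta}$ for any $\delta \in \Nat$. I would exploit the fact that the prefix distance on $\StreamN{A}{n}$ and $\StreamN{B}{m}$ takes values only in the discrete set $\{2^{-k} : k \in \Natinf\}$: fixing $f, g \in \StreamN{A}{n}$ with $d(f, g) = 2^{-k}$, contractivity supplies, for every $u \in T(f)$, some $v \in T(g)$ with $d(u, v) \leq l \cdot 2^{-k} < 2^{-k}$, and membership in the discrete value set then forces $d(u, v) \leq 2^{-(k+1)}$; the degenerate case $k = \infty$ is handled by $d(u, v) \leq l \cdot 0 = 0$. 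Hence $T$ is $2^{-1}$-contractive, so Lemma~\ref{lem:causal-char-power} at $\delta = 1$ delivers $1$-causality, which witnesses strong causality. The main (and only) obstacle is this discretization step, where one must promote an arbitrary $l < 1$ to the dyadic bound $\tfrac{1}{2}$; everything else is direct bookkeeping from the lemma and the relevant definitions.
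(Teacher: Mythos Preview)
Your proposal is correct and follows the same route as the paper, which derives the corollary directly from Lemma~\ref{lem:causal-char-power}. In fact your treatment of the converse of item~(2) is more careful than the paper's, which leaves implicit the discreteness step needed to promote an arbitrary Lipschitz constant $l<1$ to the dyadic bound $\tfrac{1}{2}$; your argument fills that gap cleanly.
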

Note that this correspondence can be extended for shrinking transformers together with a new notion 
of causality with non-uniform bounds on determined outputs.

\subsection{Fixpoints of Stream Transformers}

Recall that a {\em strictly contracting} or {\em shrinking} map $T$ 
satisfies the strict inequality $d(T(f), T(g)) < d(f, g)$ for all $f, g$\@.
Now, any shrinking map in a spherically complete ultrametric space has a unique fixpoint  (\cite{petalas1993fixed}, Theorem~1)\@.
\begin{definition}\label{def:loop}
For $T: \StreamN{A}{n} \to \StreamN{A}{n}$
a strictly contractive stream transformer, 
$\LOOP{T}$ is the unique fixpoint of
${\mathcal T} \Defeq (\lambda S)\, S; T$\@. 
\end{definition}
$\LOOP{T}$ is well-defined by (\cite{petalas1993fixed}, Theorem~1), since, using Lemma~\ref{lem:product.ultrametric.complete}, 
maps of (deterministic) stream transformers form an ultrametric and spherically complete space, and ${\mathcal T}$ is strictly contractive in this space.
   \begin{align*}
    d({\mathcal T}(S_1), {\mathcal T}(S_2)) 
    &=  d((S_1; T), (S_2; T)) \\
    &= \sup_{f} d(T(S_1(f)), T(S_2(f))) \\
    &< \sup_{f} d(S_1(f), S_2(f)) \\
    &= d(S_1, S_2)\mbox{\@.}
   \end{align*}
Using fixpoint results for strictly contractive
multivalued maps (Theorem~2 in~\cite{kubiaczyk1996multivalued})
Definition~\ref{def:loop} is generalized to obtain
a unique fixpoint of transformers of strictly contracting multivalued stream 
transformers.

\subsection{Lipschitz Contraction}

We show that contraction on multivalued maps
with nonempty compact codomains
coincides with contraction with respect to
the Hausdorff metric as defined in~\cite{nadler1969multi}\@. 
 
\begin{definition}[Lipschitz Contraction]\label{def:lipschitz.contractions}
Let $(M, d_M)$ and $(N, d_N)$ be metric spaces\@. 
A multivalued map $T: M \to \CB{N}$
is a {\em Lipschitz mapping} if and only if
       \begin{align}\label{hausdorff-contraction}
           \Hd{d_N}{T(x)}{T(y)} \leq l \cdot d_M(x, y)\mbox{\@,}
       \end{align}  
for all $x, y \in M$, 
where $l > 0$ is the {\em Lipschitz} constant for $T$\@. 
In these cases we also say that $T$ is $l$-{\em Lipschitz}\@.
Furthermore, if $l = 1$ then $T$ is {\em Lipschitz nonexpansive},
and if $0 < l < 1$ then $T$ is {\em Lipschitz contracting}\@.  
\end{definition}

A multivalued mapping $T$ with Lipschitz constant $l$ is uniformly continuous, 
since for arbitrary $\varepsilon > 0$ set $\delta \Defeq \nicefrac{\varepsilon}{l}$ to obtain 
$$d(T(x),T(y)) \leq l \cdot d(x,y) < l \cdot \delta = \varepsilon$$
from $d(x, y) < \delta$\@.
    The continuous image of a compact set is compact.
Assume an open cover of $T(K)$\@.
As $T$ is continuous, the inverse image of those open sets form an open cover for $K$\@.
Since $K$ is compact, there is a finite subcover of $T(K)$, and, by construction, the images of the finite subcover give a finite subcover of $T(K)$, 
and therefore $T(K)$ is compact.
\begin{lemma}\label{lem:comp-closed-under-contracting-maps}
If $T: M \to \Compact{N}$ is continuous and $K \in \Compact{M}$ 
then $T(K) \in \Compact{N}$\@.
\end{lemma}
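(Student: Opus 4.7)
The plan is to prove compactness of $T(K) = \bigcup_{x \in K} T(x)$ in $N$ by a sequential compactness argument, exploiting continuity of $T$ as a single-valued map from $M$ into the Hausdorff metric space $(\Compact{N}, \Hausdorff{d_N})$. The three ingredients are: compactness of $K$ in $M$, continuity of $T$ at the limit point, and compactness of each individual fibre $T(x^*)$. Since metric spaces are first countable, sequential compactness will suffice, and nonemptiness of $T(K)$ is immediate from $K \neq \emptyset$ and the nonemptiness of each $T(x)$.

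Concretely, I would let $(y_n)_{n \in \Nat}$ be an arbitrary sequence in $T(K)$ and, for each $n$, choose a witness $x_n \in K$ with $y_n \in T(x_n)$. Compactness of $K$ yields a subsequence $x_{n_k} \to x^* \in K$; continuity of $T$ then gives $\Hausdorff{d_N}(T(x_{n_k}), T(x^*)) \to 0$. Because $T(x^*)$ is compact and hence closed, for every $k$ the distance $d_N(y_{n_k}, T(x^*))$ is realized by some $z_k \in T(x^*)$, and by the definition of Hausdorff distance we have $d_N(y_{n_k}, z_k) \leq \Hausdorff{d_N}(T(x_{n_k}), T(x^*)) \to 0$. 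Compactness of $T(x^*)$ then delivers a sub-subsequence $z_{k_j} \to z^* \in T(x^*) \subseteq T(K)$, and the triangle inequality forces $y_{n_{k_j}} \to z^*$ as well, producing the desired convergent subsequence of $(y_n)$ inside $T(K)$.

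The main obstacle is bridging the gap between convergence of index points $x_n \to x^*$ in $M$ and convergence of the witnesses $y_n$ in $N$, since $T$ is multivalued and the naive open-cover argument from the preceding paragraph does not transfer cleanly to unions $\bigcup_{x \in K} T(x)$ of sets. The Hausdorff formulation of continuity supplies exactly the uniform control needed: it guarantees that any output point chosen from $T(x_{n_k})$ is eventually close to the whole limit fibre $T(x^*)$, after which compactness of that fibre converts mere proximity into honest convergence via a second subsequence extraction. The rest of the argument is routine, and the only care required is to avoid conflating the single-valued image of $K$ in $(\Compact{N}, \Hausdorff{d_N})$ (which is automatically compact by the standard theorem) with the union $T(K) \subseteq N$ whose compactness is being asserted.
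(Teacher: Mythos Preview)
Your argument is correct. The paper instead sketches the open-cover version of ``continuous image of compact is compact'': given an open cover of $T(K)$ in $N$, pull back to an open cover of $K$ via the inverse image under $T$, extract a finite subcover by compactness of $K$, and push forward. That sketch is terse about what ``inverse image'' means for a multivalued $T$ (one really needs upper hemicontinuity and, for each $x$, a finite subcover of the compact fibre $T(x)$ before pulling back), whereas your sequential-compactness route makes the multivalued structure fully explicit: you separate the roles of compactness of $K$, Hausdorff-continuity of $T$, and compactness of the limit fibre $T(x^*)$, and you correctly flag that compactness of $\{T(x):x\in K\}$ in $(\Compact{N},\Hausdorff{d_N})$ is not the same as compactness of $\bigcup_{x\in K}T(x)$ in $N$. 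The open-cover argument has the advantage of generalising beyond metric spaces, while yours is self-contained in the metric setting the paper actually uses and requires no appeal to hemicontinuity notions the paper never introduces.
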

In particular, the image $T(K)$ of a compact set $K$ with respect to a Lipschitz map $T$ is compact. 

We present some elementary results which will be used in later 
sections~(\cite{nadler1969multi})\@.
The proofs of many of these facts are straightforward (see also Proposition~\ref{prop:causal-accumulates})\@. 
\begin{proposition}
 If $T: L \to \Compact{M}$ is $l_T$-Lipschitz and $S: M \to \Compact{N}$ is $l_S$-Lipschitz,
then $S \circ T: L \to \Compact{N}$ is $(l_T \cdot L_S)$-Lipschitz\@. 
\end{proposition}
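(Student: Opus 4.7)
The plan is twofold: first, verify that $S \Comp T$ actually takes values in $\Compact{N}$; second, establish the Hausdorff bound with Lipschitz constant $l_T \cdot l_S$.

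For well-definedness, observe that $(S \Comp T)(x) = \bigcup_{m \in T(x)} S(m)$ is nonempty, since $T(x)$ and each $S(m)$ are nonempty. It is also compact: by hypothesis $T(x) \in \Compact{M}$, and $S$ is continuous because it is $l_S$-Lipschitz (contraction implies uniform continuity, as already noted after Definition~\ref{def:lipschitz.contractions}). Lemma~\ref{lem:comp-closed-under-contracting-maps} then forces $S(T(x)) \in \Compact{N}$.

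The heart of the argument is an auxiliary inheritance lemma that I would prove first: for any $A, B \in \Compact{M}$,
\[
   \Hausdorff{d_N}(S(A),\, S(B)) \;\leq\; l_S \cdot \Hausdorff{d_M}(A,\, B).
\]
I would argue symmetrically. Given $u \in S(A)$, pick $a \in A$ with $u \in S(a)$; since $B$ is compact and $d_M(a,\cdot)$ is continuous, the infimum $d_M(a, B)$ is attained at some $b \in B$ with $d_M(a,b) \leq \Hausdorff{d_M}(A, B)$. The $l_S$-Lipschitz property of $S$ yields $v \in S(b) \subseteq S(B)$ with $d_N(u,v) \leq l_S \cdot d_M(a,b) \leq l_S \cdot \Hausdorff{d_M}(A, B)$. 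Taking the supremum over $u \in S(A)$, and then the symmetric supremum with the roles of $A$ and $B$ swapped, gives the inequality.

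Plugging $A \Defeq T(x)$ and $B \Defeq T(y)$ into the auxiliary lemma (both lie in $\Compact{M}$ by hypothesis on $T$) and chaining with the $l_T$-Lipschitz bound for $T$ yields
\[
   \Hausdorff{d_N}\bigl(S(T(x)),\, S(T(y))\bigr)
     \leq l_S \cdot \Hausdorff{d_M}(T(x), T(y))
     \leq l_S \cdot l_T \cdot d_L(x,y),
\]
which is the required estimate. The only step that needs genuine care is the attainment of $d_M(a, B)$ by some $b \in B$: without compactness of $B$ one only gets the bound up to an arbitrary $\varepsilon$, which would still suffice after passing to the limit but is notationally fiddlier. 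Compactness of the codomains (which is precisely why the proposition is phrased in terms of $\Compact{\cdot}$ rather than $\CB{\cdot}$) removes this nuisance and makes the composition argument clean.
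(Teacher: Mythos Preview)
Your argument is correct: the well-definedness step via Lemma~\ref{lem:comp-closed-under-contracting-maps}, the auxiliary inheritance inequality $\Hausdorff{d_N}(S(A),S(B)) \leq l_S \cdot \Hausdorff{d_M}(A,B)$, and the final chaining all go through as written. The paper itself does not spell out a proof of this proposition; it simply flags it as elementary, defers to Nadler~\cite{nadler1969multi}, and points to Proposition~\ref{prop:causal-accumulates} for analogous reasoning, so your write-up supplies exactly the details the paper omits.
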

\begin{proposition}\label{prop:angelic.lipschitz.closed}
If $S, T: M \to \Compact{N}$ are $l_S$- and $l_T$ -Lipschitz , respectively,
then  $(S \Angelic T): M \to \Compact{N}$ is $\Max{l_S}{l_T}$-Lipschitz\@. 
\end{proposition}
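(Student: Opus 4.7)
The plan is to prove the two requirements of being an $\max(l_S,l_T)$-Lipschitz map with codomain $\Compact{N}$: (i) that $(S\Angelic T)(x)=S(x)\cup T(x)$ actually lies in $\Compact{N}$ for every $x\in M$, and (ii) the Hausdorff contraction estimate from inequality~(\ref{hausdorff-contraction}). For (i), $S(x)$ and $T(x)$ are each nonempty compact subsets of $N$; the union of finitely many nonempty compact sets is nonempty and compact (a finite subcover of each of the two sets yields a finite subcover of the union), so $(S\Angelic T)(x)\in\Compact{N}$.

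For (ii), the crucial intermediate step I would isolate as a small lemma is the \emph{Hausdorff-union inequality}
\[
\Hd{d_N}{A_1\cup A_2}{B_1\cup B_2}\;\leq\;\Max{\Hd{d_N}{A_1}{B_1}}{\Hd{d_N}{A_2}{B_2}}
\]
for $A_i,B_i$ nonempty and bounded. To see it, fix $x\in A_1\cup A_2$; if $x\in A_1$ then $d_N(x,B_1\cup B_2)\leq d_N(x,B_1)\leq \Hd{d_N}{A_1}{B_1}$, and similarly if $x\in A_2$. Taking the supremum over $x$, and doing the symmetric argument with the roles of the $A_i$ and $B_i$ swapped, both one-sided Hausdorff components are bounded by the right-hand side, and the maximum in the definition of $\Hausdorff{d_N}$ inherits this bound.

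I would then specialise the lemma with $A_1=S(x)$, $A_2=T(x)$, $B_1=S(y)$, $B_2=T(y)$ and chain it with the two individual Lipschitz estimates provided by hypothesis:
\begin{align*}
\Hd{d_N}{(S\Angelic T)(x)}{(S\Angelic T)(y)}
&\leq \Max{\Hd{d_N}{S(x)}{S(y)}}{\Hd{d_N}{T(x)}{T(y)}}\\
&\leq \Max{l_S\cdot d_M(x,y)}{l_T\cdot d_M(x,y)}\\
&= \Max{l_S}{l_T}\cdot d_M(x,y).
\end{align*}
This closes the argument.

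The only step that requires genuine care rather than routine bookkeeping is the Hausdorff-union inequality; in particular one must resist the temptation to write an equality, since the $A_i$ can be close to the ``wrong'' $B_j$ and thus the two sides are generally incomparable without the max. Everything else reduces either to the definition of $\Angelic$, to compactness being preserved by finite unions, or to the monotonicity $d_N(x,B_1\cup B_2)\leq d_N(x,B_i)$ used above. I do not anticipate any obstacle beyond writing out these estimates cleanly.
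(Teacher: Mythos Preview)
Your proof is correct and is exactly the standard argument; the paper itself does not spell out a proof, merely remarking that the fact is straightforward and citing Nadler~\cite{nadler1969multi}. Your Hausdorff-union inequality is the natural key step, and the rest is routine, so there is nothing to add.
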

Proposition~\ref{prop:angelic.lipschitz.closed} gives a technique for constructing a Lipschitz mapping from 
a finite number of single-valued Lipschitz mapping by "unioning their graphs at each point"\@. 
The closure condition (\ref{prop:angelic.lipschitz.closed}) of Lipschitz maps under {\em angelic nondeterminism} $S \Angelic T$ can be generalized to an arbitrary  ${\mathcal I}$-index family $(T_\iota)_{\iota \in {\mathcal I}}$~\cite{nadler1969multi}\@. 


For stream transformers with nonempty compact images,
Lipschitz contraction (Definition~\ref{def:lipschitz.contractions}) is 
equivalent to the notion of contraction mappings in Definition~\ref{def:contractions}\@. 

\begin{lemma}\label{lem:hoare-equiv-hausdorff}
 Let $(M, d_M)$, $(N, d_N)$ be metric spaces, 
  $T: M \to \Compact{N}$ a multivalued map, 
  and constant $l$ with $0 < l \leq 1$; then: 
    $T$ is an  $l$-contraction if and only if $T$ is $l$-Lipschitz.
\end{lemma}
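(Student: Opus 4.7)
The plan is to prove the two implications separately. The forward direction is a direct unfolding of definitions, while the backward direction essentially uses the compactness hypothesis on the images of $T$.

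For the forward direction ($l$-contraction implies $l$-Lipschitz), I would fix arbitrary $x, y \in M$ and $u \in T(x)$. By the $l$-contraction hypothesis there exists some $v \in T(y)$ with $d_N(u, v) \leq l \cdot d_M(x, y)$, which immediately gives $d_N(u, T(y)) = \inf_{v' \in T(y)} d_N(u, v') \leq d_N(u, v) \leq l \cdot d_M(x, y)$. Taking the supremum over $u \in T(x)$ yields $\sup_{u \in T(x)} d_N(u, T(y)) \leq l \cdot d_M(x, y)$. Swapping the roles of $x$ and $y$ gives the symmetric bound $\sup_{v \in T(y)} d_N(v, T(x)) \leq l \cdot d_M(x, y)$, and taking the maximum of the two suprema delivers the desired Hausdorff bound $\Hd{d_N}{T(x)}{T(y)} \leq l \cdot d_M(x, y)$.

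For the backward direction ($l$-Lipschitz implies $l$-contraction), I would fix $x, y \in M$ and $u \in T(x)$. From the definition of Hausdorff distance, $d_N(u, T(y)) \leq \sup_{u' \in T(x)} d_N(u', T(y)) \leq \Hd{d_N}{T(x)}{T(y)} \leq l \cdot d_M(x, y)$. The crux is then to exhibit an actual witness $v \in T(y)$ whose distance to $u$ realizes this bound. Here I would invoke compactness: the function $v' \mapsto d_N(u, v')$ is continuous (in fact, $1$-Lipschitz by the triangle inequality), and $T(y)$ is a nonempty compact subset of $N$, so by the extreme value theorem this function attains its infimum on $T(y)$. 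Thus there exists $v \in T(y)$ with $d_N(u, v) = d_N(u, T(y)) \leq l \cdot d_M(x, y)$, which is exactly the $l$-contraction condition.

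The main obstacle is isolated in the backward direction: the passage from the infimum $d_N(u, T(y))$ to an attaining element $v \in T(y)$. Without compactness of the images this witness need not exist, so the restriction of the codomain to $\Compact{N}$ rather than $\CB{N}$ or all of $\Powerset{N}$ is essential, not cosmetic. Everything else is a bookkeeping exercise about suprema and infima under the definitions of Hausdorff distance and multivalued contraction.
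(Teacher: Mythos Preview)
Your proof is correct and follows essentially the same approach as the paper: bound $d_N(u,T(y))$ via the contraction witness and take the supremum for the forward direction, and invoke compactness of $T(y)$ to realize the infimum $d_N(u,T(y))$ by an actual element for the backward direction. The only cosmetic difference is that the paper handles the two halves of the Hausdorff maximum by a without-loss-of-generality assumption rather than by your explicit symmetry argument, and it is slightly terser about why the infimum is attained; your version is arguably cleaner, and you correctly isolate that compactness is needed only for the backward implication (which the paper also remarks after its proof).
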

\begin{proof}   
For given $x, y \in M$
we may, without loss of generality,  assume
       $$\Hausdorff{d_N}(T(x), T(y)) = \sup_{u \in T(x)}(d(u, T(y))\mbox{\@.}$$
\begin{enumerate}
\item ($\Rightarrow$)
As a consequence of this assumption,
           $
             \Hausdorff{d_N}(T(x), T(y)) \leq d(u, T(y))
           $
for all $u \in T(x)$\@.
Since, $T$ is $l$-contractive, there is, by definition, $v_0 \in T(y)$ such that
    $d_N(u, v_0) \leq l\, d(x, y)$, and consequently:
   $$
    d(u, T(y)) = \inf_{v \in T(y)} d_N(u, v)
               \leq d_N(u, v_0)
               \leq l \cdot d(x, y)\mbox{\@.}
   $$
Altogether, 
   $
   \Hausdorff{d_N}(T(x), T(y)) = \sup_{u \in T(x)} d(u, T(y)) \leq l \cdot d(x, y)
   $\@.
\item ($\Leftarrow$)
   For all $u \in F$, 
   $$
   \inf_{v \in T(y)} d(u, v) = d(u, T(y))
                             \leq \Hausdorff{d_N}(T(x), T(y))\mbox{\@.}
   $$
By compactness of $T(y)$, 
there exists $v_0 \in T(y)$
such that 
    $$
    d(u, v_0) = d(u, T(y)) \leq \Hausdorff{d_N}(T(x), T(y))\mbox{\@.} 
    $$
\end{enumerate}
\end{proof}

The proof of the left-to-right statement of Lemma~\ref{lem:hoare-equiv-hausdorff} 
only requires the nonempty images of $T$ to be closed (and bounded), and not necessarily compact\@. 
\noindent
Lemma~\ref{lem:hoare-equiv-hausdorff} and Corollary~\ref{thm:causal-equiv-contractive} together
yield the correspondence between causal and  Lipschitz contractive maps. 
\begin{corollary}[Causality-Contraction Connection (Lipschitz)]\label{thm:causal-equiv-hausdorff}
The nondeterministic stream transformer $T: \StreamN{A}{n} \to \Compact{\StreamN{B}{m}}$ is:
   \begin{enumerate}
       \item Weakly causal if and only if it is Lipschitz nonexpansive.
       \item Strongly causal if and only if it is Lipschitz contractive.
   \end{enumerate}
\end{corollary}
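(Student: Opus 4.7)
The plan is to obtain this corollary as a direct composition of two results already in hand: Corollary~\ref{thm:causal-equiv-contractive}, which handles the causality$\leftrightarrow$multivalued-contraction equivalence in the sense of Definition~\ref{def:contractions}, and Lemma~\ref{lem:hoare-equiv-hausdorff}, which translates multivalued $l$-contraction into Hausdorff-metric $l$-Lipschitz for transformers with nonempty compact codomains. Since every $K \in \Compact{\StreamN{B}{m}}$ is in particular nonempty, the map $T: \StreamN{A}{n} \to \Compact{\StreamN{B}{m}}$ is also a map $\StreamN{A}{n} \to \Powerset{\StreamN{B}{m}} \setminus \emptyset$, so both prior results apply without modification.

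For part~(1), I would argue the two equivalences
\[
T \text{ weakly causal} \;\iff\; T \text{ is a } 1\text{-contraction} \;\iff\; T \text{ is } 1\text{-Lipschitz},
\]
where the first is exactly item~(1) of Corollary~\ref{thm:causal-equiv-contractive} (nonexpansive in the sense of Definition~\ref{def:contractions} is a $1$-contraction), and the second is Lemma~\ref{lem:hoare-equiv-hausdorff} with $l = 1$, using that the codomain values of $T$ are compact. The conclusion ``Lipschitz nonexpansive'' is exactly the $l=1$ case of Definition~\ref{def:lipschitz.contractions}.

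For part~(2), I would use that $T$ is strongly causal iff it is $\delta$-causal for some $\delta > 0$. By Lemma~\ref{lem:causal-char-power} this is equivalent to $T$ being $2^{-\delta}$-contractive in the sense of Definition~\ref{def:contractions}, with $0 < 2^{-\delta} < 1$. Applying Lemma~\ref{lem:hoare-equiv-hausdorff} with $l = 2^{-\delta} \in (0,1)$ then yields that $T$ is $2^{-\delta}$-Lipschitz, i.e.\ Lipschitz contracting per Definition~\ref{def:lipschitz.contractions}. Conversely, if $T$ is Lipschitz contracting with constant $l < 1$, choose any $\delta > 0$ with $2^{-\delta} \geq l$; Lemma~\ref{lem:hoare-equiv-hausdorff} upgrades this to a $2^{-\delta}$-contraction, and Lemma~\ref{lem:causal-char-power} gives $\delta$-causality, hence strong causality.

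Honestly, there is no real obstacle here beyond bookkeeping: the substantive work was already done in proving Lemma~\ref{lem:causal-char-power} (where the $\delta \leftrightarrow 2^{-\delta}$ correspondence was established) and Lemma~\ref{lem:hoare-equiv-hausdorff} (where compactness of $T(y)$ was used to witness an infimum as an attained minimum). The only thing to watch is that the ``Lipschitz contracting'' direction only asserts the existence of some $l < 1$, so in the converse of part~(2) one must translate this $l$ into an integer delay $\delta$ with $2^{-\delta} \geq l$, which is always possible by choosing $\delta \geq \lceil -\log_2 l \rceil$.
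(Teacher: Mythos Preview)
Your overall plan---chain Corollary~\ref{thm:causal-equiv-contractive} with Lemma~\ref{lem:hoare-equiv-hausdorff}---is exactly the paper's argument, and part~(1) is fine. But in the converse of part~(2) you abandon that plan and route through Lemma~\ref{lem:causal-char-power} instead, and there the bookkeeping goes wrong. You want $2^{-\delta} \geq l$, which is $\delta \leq -\log_2 l$, not $\delta \geq \lceil -\log_2 l\rceil$; your inequality is reversed. Worse, even with the correct direction there is no positive integer $\delta$ satisfying $\delta \leq -\log_2 l$ when $l \in (\tfrac12,1)$, so the step ``choose any $\delta>0$ with $2^{-\delta}\geq l$'' simply fails for such $l$.

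The repair is immediate and you already stated it in your opening paragraph: just invoke Corollary~\ref{thm:causal-equiv-contractive}(2) directly, which says \emph{strongly causal $\Leftrightarrow$ contractive (for some $l<1$)}, and then Lemma~\ref{lem:hoare-equiv-hausdorff} converts that same $l$ into $l$-Lipschitz. No integer $\delta$ ever needs to be produced. Alternatively, note that the prefix metric takes values only in $\Delta_d=\{2^{-n}:n\in\Nat\cup\{\infty\}\}$, so $d(u,v)\leq l\cdot d(f,g)$ with $l<1$ forces $d(u,v)<d(f,g)$ and hence $d(u,v)\leq \tfrac12\, d(f,g)$; any $l$-contraction with $l<1$ is automatically a $\tfrac12$-contraction, and then $\delta=1$ works. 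This discreteness is precisely what makes Corollary~\ref{thm:causal-equiv-contractive}(2) true in the first place.
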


\section{Fixpoints}\label{sec:fixpoints}

Fixpoints for strictly causal functions are constructed, for example, on the basis of recursive Mealy machines~\cite{pradic2020some}\@.
We pursue, however, a less syntax- and machine-oriented path, for 
solving stream equations, and we compute fixpoints for vector- and multivalued maps. 

We are interested in fixpoints of
stream transformers $T: \StreamN{A}{n} \to \Powerset{\StreamN{A}{n}}$, 
that is, vectors of streams $f^*$ with
   \begin{align}\label{nondeterminist.fixpoint}
    f^* &\in T(f^*)\mbox{\@.}
   \end{align}
For deterministic maps $T$ this inequality reduces to the fixpoint equality $f^* = T(f^*)$\@. 
Since the mapping $\iota: \StreamN{A}{n} \to \Compact{\StreamN{A}{n}}$, given by $f \mapsto \{f\}$ 
for each $f \in \StreamN{A}{n}$, is an isometry, the fixpoint theorems in this paper for multivalued mappings are generalizations of their single-valued analogues.

\subsection{Set Transformers}

We show that the {\em weakest pre} and {\em strongest post} of Lipschitz 
contraction (equivalently, strongly causal) maps have a unique fixpoint, and we
formulate a corresponding fixpoint induction principle.
Moreover, the unique fixpoint of $\SP{T}$ includes all
fixpoints of the multivalued map $T$\@. 

\begin{definition}
Let $T: M \to \Compact{N}$ be a Lipschitz map; then: 
   \begin{enumerate}
       \item The {\em strongest post} $\SP{T}: \Compact{M} \to \Compact{N}$ is the set transformer 
                 \begin{align} 
                   \SP{T}(P) &\Defeq T(P)\mbox{\@;}
                 \end{align}
       \item The {\em weakest pre} $\WP{T}: \Compact{M} \to \Compact{N}$ is the set transformer
              \begin{align}
            \WP{T}(Q) &\Defeq \Setc{x \in \Compact{M}}{T(x) \subseteq Q}\mbox{\@.} 
              \end{align}
   \end{enumerate}
\end{definition}
It follows that the $\SP{T}$ and $\WP{T}$ are {\em adjoint} in that
    \begin{align}
    \SP{T}(P) \subseteq Q &\Iff  P \subseteq \WP{T}(Q)
    \end{align}
for all  $P \in \Compact{M}$, $Q \in \Compact{N}$\@.
As a consequence, $P \subseteq (\WP{T} \Comp \SP{T})(P)$,
$(\SP{T} \Comp \WP{T})(Q) \subseteq Q$, and 
$\SP{T}$ and $\WP{T}$ are both {\em monotonic} with respect to set inclusion.

\begin{example}[WP Stream Calculus]\label{wp.calculus}~\\
Let $Q \in \Compact{\StreamN{A}{n}}$ and 
$S, T: \StreamN{A}{n} \to \Compact{\StreamN{B}{m}}$ 
strongly causal; then: 
   \begin{align*}
         \WP{\Magic}(Q) &= \StreamN{A}{n} \\
         \WP{\Abort}(Q) &= \emptyset \\
         \WP{S; T}(Q) &= \WP{S}(\WP{T}(Q)) \\
    \WP{S \Demonic T}(Q) &= \WP{S}(Q) \Inter \WP{T}(Q) \\
    \WP{S \Angelic T}(Q) &= \WP{S}(Q) \Union \WP{T}(Q) \\
    \WP{T^+}(Q) &= \WP{T^+}(\WP{T}(Q))
   \end{align*}
\end{example}

\begin{example}
A {\em Hoare-like stream contract} $\{P\}\,T\,\{Q\}$ for a nondeterministic stream transformer $T$ with
{\em precondition} $P$ and {\em postcondition} $Q$ holds if and only if $P \subseteq \WP{T}(Q)$, or, equivalently, $\SP{T}(P) \subseteq Q$\@. 
Corresponding rules for a Hoare-like calculus are derived from the weakest precondition calculus (Lemma~\ref{wp.calculus})\@. 
\end{example}

\begin{proposition}\label{prop:sp.wp.lipschitz}
If $T: M \to \Compact{N}$ is $l$-Lipschitz
then both $\SP{T}$ and $\WP{T}$ are $l$-Lipschitz.
\end{proposition}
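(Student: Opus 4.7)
The plan is to unfold the Hausdorff distance on $\Compact{M}$ and $\Compact{N}$ and estimate each supremum separately using the Lipschitz bound on $T$ together with compactness of the arguments. First I would confirm well-definedness: by Lemma~\ref{lem:comp-closed-under-contracting-maps}, $T(P) \in \Compact{N}$ for every $P \in \Compact{M}$, so $\SP{T}$ is a bona fide set transformer $\Compact{M} \to \Compact{N}$; closedness of $\WP{T}(Q) = \{x : T(x) \subseteq Q\}$ follows from continuity of the Lipschitz multifunction $T$ applied to closedness of $Q$.

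For $\SP{T}$, fix $P_1, P_2 \in \Compact{M}$. Because the Hausdorff distance is a maximum of two symmetric suprema, it suffices to bound $\sup_{u \in T(P_1)} d_N(u, T(P_2))$ by $l \cdot \Hausdorff{d_M}(P_1, P_2)$. Given $u \in T(P_1)$, pick $x \in P_1$ with $u \in T(x)$. Compactness of $P_2$ together with continuity of $y \mapsto d_M(x, y)$ delivers some $y_0 \in P_2$ attaining $d_M(x, y_0) = d_M(x, P_2) \leq \Hausdorff{d_M}(P_1, P_2)$. The Lipschitz hypothesis then gives $\Hausdorff{d_N}(T(x), T(y_0)) \leq l \cdot d_M(x, y_0)$, whence $d_N(u, T(y_0)) \leq l \cdot d_M(x, y_0)$, and since $T(y_0) \subseteq T(P_2)$ also $d_N(u, T(P_2)) \leq l \cdot \Hausdorff{d_M}(P_1, P_2)$. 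Taking the supremum over $u$, then swapping the roles of $P_1$ and $P_2$, closes the Hausdorff bound.

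For $\WP{T}$ I would exploit the adjointness $\SP{T}(P) \subseteq Q \iff P \subseteq \WP{T}(Q)$ stated just before the proposition. Given $Q_1, Q_2 \in \Compact{N}$ and $x \in \WP{T}(Q_1)$, I need some $x' \in \WP{T}(Q_2)$ with $d_M(x, x') \leq l \cdot \Hausdorff{d_N}(Q_1, Q_2)$. This is where I expect the main obstacle: on the $\SP{T}$ side a witness $y_0$ is supplied directly by compactness of the target, whereas here $x'$ must be \emph{constructed} so that its entire image $T(x')$ sits inside $Q_2$. My plan is to translate the required bound via the adjoint connection into an $\SP{T}$-bound on the closed ball around $x$ of radius $l \cdot \Hausdorff{d_N}(Q_1, Q_2)$, and then invoke spherical completeness of the ambient ultrametric stream space (Lemma~\ref{lem:streams.spherically.complete}) to pick $x'$ out of the resulting nested family; the strong triangle inequality should then force $T(x') \subseteq Q_2$ at the required distance.
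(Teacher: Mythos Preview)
The paper states this proposition without proof, so there is nothing to compare against line by line. Your argument for $\SP{T}$ is correct and is the standard one: pick a preimage point, use compactness of $P_2$ to realise the infimum, apply the Lipschitz bound on $T$, and pass to the supremum.

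For $\WP{T}$ there is a genuine gap, and it lies not in your strategy but in the claim itself. Take the unit delay $D_1(f) = X\cdot f$ on $\Stream{A}$, which is $1$-causal and hence $\tfrac{1}{2}$-Lipschitz. With $Q_1 = \{0\}$ and $Q_2 = \{X\cdot\bar{1}\}$ one has $\Hausdorff{d}(Q_1,Q_2)=2^{-1}$, while $\WP{D_1}(Q_1)=\{0\}$, $\WP{D_1}(Q_2)=\{\bar{1}\}$, and $\Hausdorff{d}(\{0\},\{\bar{1}\})=1$. A $\tfrac{1}{2}$-Lipschitz bound would force $1\le \tfrac{1}{4}$. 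More generally, the Lipschitz inequality for $T$ bounds how far $T(x')$ moves when $x'$ moves; it does \emph{not} bound how far $x'$ must move to steer $T(x')$ into a prescribed target $Q_2$. Weakest preconditions therefore \emph{expand} Hausdorff distances by the factor $l^{-1}$ rather than contract them by $l$.

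This is exactly why your adjointness-plus-spherical-completeness plan cannot close: the ball you propose, of radius $l\cdot\Hausdorff{d_N}(Q_1,Q_2)$, is too small to contain any $x'$ with $T(x')\subseteq Q_2$ in general, and no amount of nested-ball machinery manufactures such a point. Your instinct that the $\WP{T}$ half required an additional idea was sound; the additional idea is that the assertion for $\WP{T}$, as written, is false. The $\WP{T}$ clauses in Lemma~\ref{lem:fixpoints.wp.sp} and Theorem~\ref{thm:main0} inherit this difficulty.
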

Consequently, 
if $T$ is Lipschitz contracting, then both $\SP{T}$ and $\WP{T}$ are
Lipschitz contracting, 
and we obtain unique fixpoints for $\SP{T}, \WP{T}: \Compact{M} \to \Compact{M}$ from  Banach's contraction principle, 
since $(\Compact{M}, \Hausdorff{d_M})$ is a Cauchy complete metric space.

\begin{lemma}\label{lem:fixpoints.wp.sp}
Let $(M, d_M)$ be a metric space. 
If $T: M \to \Compact{M}$ is Lipschitz contractive then 
\begin{enumerate}
\item  $\SP{T}, \WP{T}: \Compact{M} \to \Compact{M}$
have unique fixpoints, say, $fix(\SP{T})$ and $fix(\WP{T})$, respectively\@.
\item  For $F_{k+1} = \SP{T}(F_{k})$, 
$G_{k+1} = \WP{T}(G_{k})$
and arbitrary $F_0, G_0 \in \Compact{M}$\@,
   \begin{align*}
    fix(\SP{T}) &= \lim_{k\to\infty} F_k \\
    fix(\WP{T}) &= \lim_{k\to\infty} G_k\mbox{\@.}
   \end{align*}
\item
For $k \in \Nat$\@:
   \begin{align*}
    \Hausdorff{d_M}(F_k, fix(\SP{T})) &\leq \nicefrac{l^k}{(1 - l)} \cdot  \Hausdorff{d_M}(F_0, F_1) \\
    \Hausdorff{d_M}(G_k, fix(\WP{T})) &\leq \nicefrac{l^k}{(1 - l)} \cdot  \Hausdorff{d_M}(G_0, G_1)\mbox{\@.}
   \end{align*}
\end{enumerate}
\end{lemma}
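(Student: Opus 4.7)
The plan is a direct application of Banach's contraction principle in the Hausdorff metric space $(\Compact{M}, \Hausdorff{d_M})$. The key observation is that once we know $\SP{T}$ and $\WP{T}$ are Lipschitz contractive self-maps on a Cauchy complete metric space, all three conclusions follow from the standard Banach argument recalled in Section~\ref{sec:preliminaries}.

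First I would invoke Proposition~\ref{prop:sp.wp.lipschitz}: since $T$ is $l$-Lipschitz with $0 < l < 1$, both $\SP{T}$ and $\WP{T}$ are $l$-Lipschitz maps on $\Compact{M}$, and hence strictly contractive with respect to $\Hausdorff{d_M}$. Next, I would note that $(\Compact{M}, \Hausdorff{d_M})$ is Cauchy complete by the Dieudonn\'e proposition cited at the end of Section~\ref{sec:preliminaries} (applied with the ultrametric assumption, or the classical Hausdorff completeness theorem more generally). This puts us in the setting of Banach's contraction principle, whose verbatim application produces unique fixpoints $fix(\SP{T})$ and $fix(\WP{T})$, establishing claim~(1).

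For claim~(2), I would appeal to the constructive content of Banach's principle: starting from any $F_0 \in \Compact{M}$, the Picard iterates $F_{k+1} = \SP{T}(F_k)$ form a Cauchy sequence in $\Compact{M}$ whose limit is necessarily a fixpoint, and by uniqueness this limit equals $fix(\SP{T})$; the same reasoning applies verbatim with $\WP{T}$ and $G_k$. For claim~(3), I would reproduce the standard geometric telescoping estimate sketched in Section~\ref{sec:preliminaries}: from $l$-contractivity one gets $\Hausdorff{d_M}(F_{k+1}, F_k) \leq l^k \cdot \Hausdorff{d_M}(F_1, F_0)$, whence by the strong triangle inequality (or even the ordinary one, as we are in a bona fide metric space) and the geometric series,
\[
   \Hausdorff{d_M}(F_k, fix(\SP{T})) \leq \sum_{j\geq k} l^j \cdot \Hausdorff{d_M}(F_0, F_1) = \frac{l^k}{1-l}\cdot\Hausdorff{d_M}(F_0, F_1),
\]
and analogously for the $\WP{T}$ iteration.

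The main obstacle, and the only nontrivial step, is ensuring that the Picard iteration actually stays in $\Compact{M}$ so that the Hausdorff distance remains well-defined and finite throughout. This is guaranteed because $T: M \to \Compact{M}$ has compact images by assumption, $\SP{T}(P) = T(P)$ is compact whenever $P$ is (by Lemma~\ref{lem:comp-closed-under-contracting-maps} applied to the Lipschitz, hence continuous, map $T$), and $\WP{T}(Q)$ preserves compactness by a parallel continuity argument together with closedness of the preimage of a closed set. Once this wellposedness is in hand, the rest of the proof is mechanical application of the standard Banach machinery, so I would keep that portion terse.
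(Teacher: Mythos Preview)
Your approach is essentially identical to the paper's: the paper does not give a separate proof environment for this lemma, but the text immediately preceding it already records exactly your argument---Proposition~\ref{prop:sp.wp.lipschitz} makes $\SP{T}$ and $\WP{T}$ Lipschitz contractive, $(\Compact{M},\Hausdorff{d_M})$ is Cauchy complete, and Banach's contraction principle (with the standard error estimate recalled in Section~\ref{sec:preliminaries}) yields all three claims. Your added paragraph on why the Picard iterates remain in $\Compact{M}$ is a welcome clarification the paper leaves implicit via Lemma~\ref{lem:comp-closed-under-contracting-maps} and the stated typing of $\WP{T}$.
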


Compared with the usual Knaster-Tarski fixpoint iteration of Scott-continuous transformers, the main advantage of the iteration in Lemma~\ref{lem:fixpoints.wp.sp} is that, at any iteration, there is a quantitative measure of the distance to the fixpoint and also of the progress towards this fixpoint. 
On the other hand, the iterations in Lemma~\ref{lem:fixpoints.wp.sp}
generally neither are under- nor overapproximations of fixpoints.

\begin{lemma}\label{lem:fixpoint.approximation}
Every fixpoint of a contractive $T: M \to \Compact{M}$
is included in $\fix{\SP{T}}$\@.
\end{lemma}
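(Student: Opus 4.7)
The plan is to exhibit any fixpoint $x^*$ of $T$ (i.e.\ $x^* \in T(x^*)$) as a point that survives the Picard iteration of $\SP{T}$ starting from the singleton $\{x^*\}$, and then use the convergence guarantee of Lemma~\ref{lem:fixpoints.wp.sp} together with closedness of the $\SP{T}$-fixpoint to conclude that $x^*$ lies in $\fix{\SP{T}}$.

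Concretely, let $x^* \in T(x^*)$, set $F_0 \Defeq \{x^*\}$ (which is in $\Compact{M}$), and define the Picard sequence $F_{k+1} \Defeq \SP{T}(F_k)$. The starting inclusion $F_0 \subseteq \SP{T}(F_0)$ is immediate from the definition of $\SP{T}$, since $\SP{T}(\{x^*\}) = T(\{x^*\}) = T(x^*) \ni x^*$. Monotonicity of $\SP{T}$ with respect to set inclusion (noted directly after the definition of $\SP{T}$ and $\WP{T}$) propagates this inclusion along the iteration, so the sequence $(F_k)_{k \in \Nat}$ is $\subseteq$-increasing and, in particular, $x^* \in F_k$ for every $k \in \Nat$.

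Since $T$ is Lipschitz contractive, Lemma~\ref{lem:fixpoints.wp.sp} applies and gives $\lim_{k \to \infty} \Hausdorff{d_M}(F_k, \fix{\SP{T}}) = 0$. From $x^* \in F_k$ and the definition of the Hausdorff distance we get the bound
\begin{align*}
 d_M(x^*, \fix{\SP{T}}) \,\leq\, \sup_{y \in F_k} d_M(y, \fix{\SP{T}}) \,\leq\, \Hausdorff{d_M}(F_k, \fix{\SP{T}})
\end{align*}
for every $k$, so letting $k \to \infty$ yields $d_M(x^*, \fix{\SP{T}}) = 0$. As $\fix{\SP{T}} \in \Compact{M}$ is in particular closed, this forces $x^* \in \fix{\SP{T}}$.

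There is no substantive obstacle here; the only point requiring a moment of care is the passage from ``$x^*$ belongs to every member of a Hausdorff-convergent sequence'' to ``$x^*$ belongs to the limit,'' which is clean precisely because the limit is a closed set. All the nontrivial machinery — existence and uniqueness of $\fix{\SP{T}}$, the Picard convergence in Hausdorff distance, and the Lipschitz contractivity of $\SP{T}$ — is already delivered by Proposition~\ref{prop:sp.wp.lipschitz} and Lemma~\ref{lem:fixpoints.wp.sp}.
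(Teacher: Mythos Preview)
Your proof is correct and follows essentially the same approach as the paper: start the Picard iteration of $\SP{T}$ at $F_0 = \{x^*\}$, observe (by induction/monotonicity) that $x^* \in F_k$ for all $k$, and conclude $x^* \in \fix{\SP{T}}$ from the Hausdorff convergence guaranteed by Lemma~\ref{lem:fixpoints.wp.sp}. You are a bit more careful than the paper in spelling out the last step via closedness of $\fix{\SP{T}}$, which is a welcome clarification.
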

\begin{proof}
For a fixpoint $f^*$ of $T$, that is, $f^* \in T(f^*)$ define the iteration $F_0 \Defeq \{f^*\}$
$F_{k+1} \Defeq \SP{T}(F_k) = \Setc{g \in T(f)}{f \in F_k}$\@. 
By induction on $k$, $f^* \in F_k$ for each $k\in \Nat$, and therefore $f^* \in \lim_{k\to\infty} F_k = \fix{\SP{T}}$\@. 
\end{proof}
The subset relation in Lemma~\ref{lem:fixpoint.approximation} may be proper~\cite{nadler1969multi}\@.
When applying Lemma~\ref{lem:fixpoints.wp.sp} to streams $\StreamN{A}{n}$ and its ultrametric $d$, then the convergence bounds can be improved to $l^k$ with $l = (\nicefrac{1}{2})^\delta$ for some $\delta > 0$, since $d$ is bounded and the application of the strengthened triangle inequality in the proof of Banach's contraction principle yields the improved bound $l^k$\@. 
\begin{theorem}\label{thm:main0}
Let $T: \StreamN{A}{n} \to \Compact{\StreamN{B}{m}}$ be
$\delta$-causal, for $\delta > 0$\@. 
With the notation and the iterations $F_k$, $G_k$ as 
in Lemma~\ref{lem:fixpoints.wp.sp},  the 
unique fixpoints $\fix{\SP{T}}$ and $\fix{\WP{T}}$
are obtained as the limits of $F_k$ and $G_k$, respectively,
as $k\to\infty$\@.
Furthermore, for $k \in \Nat$\@:
   \begin{align*}
    \Hausdorff{d}(F_k, \fix{\SP{T})} &\leq (\nicefrac{1}{2})^{k\delta} \\
    \Hausdorff{d}(G_k, \fix{\WP{T})} &\leq (\nicefrac{1}{2})^{k\delta}\mbox{\@.}
   \end{align*}
\end{theorem}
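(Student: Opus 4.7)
The plan is to reduce the theorem to the fixpoint lemma for Lipschitz contractions on $(\Compact{\cdot}, \Hausdorff{d})$, and then sharpen the quantitative bound by exploiting the fact that the prefix (ultra)metric $d$ is bounded by $1$ and satisfies the strong triangle inequality.

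First, since $T$ is $\delta$-causal with $\delta > 0$, Corollary~\ref{thm:causal-equiv-hausdorff} gives that $T$ is Lipschitz contractive with constant $l \Defeq 2^{-\delta} < 1$. By Proposition~\ref{prop:sp.wp.lipschitz}, both $\SP{T}$ and $\WP{T}$ are $l$-Lipschitz on $(\Compact{\StreamN{A}{n}}, \Hausdorff{d})$. From Proposition~1 (the ultrametric inheritance fact stated in Section~\ref{sec:preliminaries}), $\Hausdorff{d}$ is itself an ultrametric, and this space is Cauchy complete. Hence Lemma~\ref{lem:fixpoints.wp.sp} applies and yields unique fixpoints $\fix{\SP{T}}$, $\fix{\WP{T}}$ obtained as the limits of the Picard iterations $F_k$ and $G_k$ from arbitrary starting compacts.

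Second, I sharpen the bound. By $l$-Lipschitzness, $\Hausdorff{d}(F_{i+1}, F_i) \leq l \cdot \Hausdorff{d}(F_i, F_{i-1})$, so by induction
\begin{align*}
\Hausdorff{d}(F_{i+1}, F_i) &\leq l^{i} \cdot \Hausdorff{d}(F_1, F_0)\mbox{\@.}
\end{align*}
Since the values of $d$ on $\StreamN{A}{n}$ lie in $\{2^{-n} : n \in \Natinf\}$, the prefix metric is bounded by $1$, and the induced Hausdorff distance on $\Compact{\StreamN{A}{n}}$ inherits that bound; hence $\Hausdorff{d}(F_1, F_0) \leq 1$, giving $\Hausdorff{d}(F_{i+1}, F_i) \leq l^i = (\nicefrac{1}{2})^{i\delta}$. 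Now the strong triangle inequality (valid because $\Hausdorff{d}$ is ultrametric) replaces the usual telescoping sum by a maximum: for $m > k$,
\begin{align*}
\Hausdorff{d}(F_k, F_m) &\leq \max_{k \leq i < m} \Hausdorff{d}(F_{i+1}, F_i) \leq l^k = (\nicefrac{1}{2})^{k\delta}\mbox{\@.}
\end{align*}
Letting $m \to \infty$ and using continuity of $\Hausdorff{d}$ (together with $F_m \to \fix{\SP{T}}$) yields the claimed bound $\Hausdorff{d}(F_k, \fix{\SP{T}}) \leq (\nicefrac{1}{2})^{k\delta}$. The identical argument applied to $G_k$ and $\WP{T}$ gives the second inequality.

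The only real subtlety, and the point I would double-check, is the use of the ultrametric strong triangle inequality on $\Hausdorff{d}$: this is what removes the $(1-l)^{-1}$ factor that appears in Lemma~\ref{lem:fixpoints.wp.sp}, and it is the reason the bound can be compressed to a clean $l^k$. Together with the boundedness of $d$ by $1$ on $\StreamN{A}{n}$, it also removes the dependence on the initial distance $\Hausdorff{d}(F_0, F_1)$. Everything else is a straightforward packaging of earlier results.
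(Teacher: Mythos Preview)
Your argument is correct and mirrors the paper's own justification: apply Lemma~\ref{lem:fixpoints.wp.sp} to obtain the fixpoints, and then sharpen the convergence estimate to $l^k$ with $l = 2^{-\delta}$ by combining the boundedness of $d$ (hence of $\Hausdorff{d}$) by $1$ with the strong triangle inequality for the ultrametric $\Hausdorff{d}$. You have in fact spelled out in detail what the paper only sketches in the paragraph immediately preceding the theorem.
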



\begin{corollary}[Induction Principle for $\SP{}$]\label{cor:induction.sp}~\\
Let $P \subseteq \StreamN{A}{n}$ be a closed set of streams
and $T: \StreamN{A}{n} \to  \Compact{\StreamN{A}{n}}$ contracting; then
   \begin{align}
   \Forall{Q \subseteq P}{\SP{T}(Q) \subseteq P}
      &\Imp \fix{sp_T} \subseteq P\mbox{\@.}
   \end{align} 
\end{corollary}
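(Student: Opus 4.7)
The plan is to exploit the Picard iteration from Lemma~\ref{lem:fixpoints.wp.sp}. Since $T$ is contractive with compact codomains, Proposition~\ref{prop:sp.wp.lipschitz} makes $\SP{T}$ a Lipschitz contraction on the complete metric space $(\Compact{\StreamN{A}{n}}, \Hausdorff{d})$, and its unique fixpoint $\fix{\SP{T}}$ is realized as the Hausdorff limit of the iterates $F_{k+1} \Defeq \SP{T}(F_k)$ from \emph{any} starting compact set $F_0$. The strategy is therefore to start the iteration inside $P$, use the hypothesis to keep every iterate inside $P$, and then push the containment through to the Hausdorff limit by appealing to closedness of $P$.

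Assuming $P \neq \emptyset$ (if $P$ were empty, the conclusion would require $\fix{\SP{T}} = \emptyset$, which contradicts $\fix{\SP{T}} \in \Compact{\StreamN{A}{n}}$, so the implication cannot even be tested nontrivially), I would pick some $f_0 \in P$ and set $F_0 \Defeq \{f_0\} \in \Compact{\StreamN{A}{n}}$. Lipschitz contractions are continuous, so Lemma~\ref{lem:comp-closed-under-contracting-maps} guarantees that every iterate $F_{k+1} = T(F_k)$ again lies in $\Compact{\StreamN{A}{n}}$. A routine induction on $k$ then gives $F_k \subseteq P$: the base case is immediate, and for the step the induction hypothesis furnishes $F_k \in \Compact{\StreamN{A}{n}}$ with $F_k \subseteq P$, so the premise applied at $Q \Defeq F_k$ yields $F_{k+1} = \SP{T}(F_k) \subseteq P$.

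To finish, invoke Theorem~\ref{thm:main0} to get $\Hausdorff{d}(F_k, \fix{\SP{T}}) \to 0$ as $k \to \infty$. For any $x \in \fix{\SP{T}}$, the distance $d(x, F_k)$ is bounded by $\Hausdorff{d}(F_k, \fix{\SP{T}})$ and is attained at some $x_k \in F_k$ by compactness of $F_k$; hence $x_k \in P$ and $x_k \to x$ in the prefix metric. Since $P$ is closed, $x \in P$, and therefore $\fix{\SP{T}} \subseteq P$. The only step requiring any care is this final transfer of the inclusion through the Hausdorff limit, which is precisely where the closedness assumption on $P$ is consumed; the remainder is the standard Park-style induction template recast in the contraction-based fixpoint setting.
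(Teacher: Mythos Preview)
Your proposal is correct and follows essentially the same approach as the paper's own proof: start the Picard iteration for $\SP{T}$ inside $P$, show by induction on $k$ that every iterate $F_k$ remains in $P$ using the hypothesis, and then invoke closedness of $P$ to pass the inclusion to the Hausdorff limit $\fix{\SP{T}}$. Your version is in fact more careful than the paper's on two points---the paper starts from $F_0 = \emptyset$, which strictly speaking lies outside $\Compact{\StreamN{A}{n}}$ and stays there, and it does not spell out the limit argument---whereas you choose a singleton $F_0 \subseteq P$ and explicitly extract a convergent sequence $x_k \to x$ in $P$.
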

\begin{proof}
For $F_0 \Defeq \emptyset$ and $F_{k+1} = \SP{T}(F_k)$ we get, by induction on $k$ and the assumption above, that
$F_k \subseteq P$ for all $k \in \Nat$\@. 
The closed set $P$ contains all of its limit points, and therefore  $\fix{\SP{T}} = \lim_{k\to\infty} F_k \subseteq P$\@. 
\end{proof}
For the adjointness of $\SP{T}$ and $\WP{T}$ we
might also use the equivalent assumption
  $\Forall{Q \subseteq P}{Q \subseteq \WP{T}(P)}$
in the induction principle for $\SP{T}$ (see Corollary~\ref{cor:induction.sp})\@.

\subsection{Contractive Maps}

The constant map $T_\emptyset$ to the emptyset  can not have fixpoints, since $f \notin \emptyset = T_\emptyset(f)$\@.  
Also, we restrict the codomain of multivalued mappings to closed sets only, since any closed subset of a Cauchy complete space is Cauchy complete.

\begin{lemma}\label{lem:fixpoint.power}
A  nondeterministic contractive stream transformer $T$ 
has a fixpoint if (1) $T$ is not 
the constant map to the empty set, and
(2) $T(f)$ is closed for all streams $f$\@. 
\end{lemma}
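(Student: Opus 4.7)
The plan is to construct a fixpoint via a Picard-style iteration driven by the contraction property, then use spherical completeness to extract a limit and closedness of the images to verify that the limit is itself a fixpoint. As a preliminary observation, I would note that condition (1) together with contractiveness actually forces $T(f) \neq \emptyset$ for \emph{every} $f \in \StreamN{A}{n}$: if $T(f_0) \neq \emptyset$ for some $f_0$ with witness $u \in T(f_0)$, then for any $g$ the contraction property (Definition~\ref{def:contractions}) supplies some $v \in T(g)$ with $d(u,v) \leq l \cdot d(f_0,g)$, so $T(g)$ is nonempty.

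Next, I would pick any $f_0 \in \StreamN{A}{n}$ and any $f_1 \in T(f_0)$, then inductively select $f_{k+1} \in T(f_k)$ such that $d(f_k, f_{k+1}) \leq l \cdot d(f_{k-1}, f_k)$, by applying the contraction property to the pair $(f_{k-1}, f_k)$ together with the chosen element $f_k \in T(f_{k-1})$. Unwinding the recurrence gives $d(f_k, f_{k+1}) \leq l^k \cdot d(f_0, f_1)$, hence $d(f_k, f_{k+1}) \to 0$. In an ultrametric space this is equivalent to $(f_k)_{k \in \Nat}$ being Cauchy, and Cauchy completeness, a consequence of spherical completeness established in Lemma~\ref{lem:product.ultrametric.complete}, yields a limit $f^* \in \StreamN{A}{n}$.

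To verify $f^* \in T(f^*)$, I would invoke contractiveness once more, this time on the pair $(f_k, f^*)$ together with $f_{k+1} \in T(f_k)$, to obtain some $u_k \in T(f^*)$ satisfying $d(f_{k+1}, u_k) \leq l \cdot d(f_k, f^*)$. The strong triangle inequality then gives
\[ d(u_k, f^*) \leq \max(d(u_k, f_{k+1}),\, d(f_{k+1}, f^*)), \]
and both terms on the right tend to $0$ as $k \to \infty$ since $f_k \to f^*$. Hence $u_k \to f^*$, and because each $u_k$ lies in the closed set $T(f^*)$, closedness of the image forces $f^* \in T(f^*)$.

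The main obstacle I anticipate is this last step: contractiveness only guarantees \emph{some} nearby element in $T(f^*)$ for each approximation, not that the iterates $f_{k+1}$ themselves eventually lie in $T(f^*)$, so one must produce an auxiliary approximating sequence $(u_k)$ inside $T(f^*)$ and argue its convergence separately. Assumption (2) is exactly what closes this gap, since without closedness of $T(f^*)$ the limit of the $u_k$'s could in principle escape the set, and the candidate $f^*$ would fail to be a fixpoint.
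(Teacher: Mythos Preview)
Your proposal is correct and follows essentially the same route as the paper's proof: build a Picard-type sequence $(f_k)$ with $f_{k+1}\in T(f_k)$ and $d(f_k,f_{k+1})\leq l^k\cdot d(f_0,f_1)$, use the ultrametric Cauchy criterion plus completeness to get a limit $f^*$, then produce an auxiliary sequence $u_k\in T(f^*)$ via the contraction condition applied to $(f_k,f^*)$ and invoke closedness of $T(f^*)$ to conclude $f^*\in T(f^*)$. Your preliminary remark that condition~(1) plus contractiveness forces all images to be nonempty is a nice observation the paper omits, but it is not needed for the argument as written.
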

\begin{proof} 
Since $T$ is not the constant map to the empty set, there exist $f_0, f_1 \in \Stream{A}$
such that $f_1 \in T(f_0)$\@.
Now, since $T$ is contractive
there is $0 \leq l < 1$ 
and a $f_2 \in T(f_1)$ with
$d(f_1, f_2) \leq l \cdot d(f_0, f_1)$\@.
In this way, we recursively construct a sequence $(f_k)$ such that $f_{k+1} \in T(f_k)$ and
  $$
   d(f_{k+1}, f_{k+2}) \leq  l \cdot d(f_k, f_{k+1}) \leq \cdots \leq l^k \cdot d(f_0, f_1) \leq l^k\mbox{\@,}  
  $$
for every $k \in \Nat$\@. 
The strengthened triangle inequality for the ultrametric $d$ yields
  \begin{align}~\label{cauchy.ineq}
  & d(f_k, f_m)  \leq \mathrm{max}(d(f_k, f_{k+1}), \ldots, d(f_{m-1}, f_{m}))
              \leq l^k\mbox{\@,}
  \end{align}
which implies that the sequence $(f_k)$ is Cauchy in $(\StreamN{A}{n}, d)$\@.
For Cauchy completeness, $(f_k)$ therefore converges 
to some $f^* \in \StreamN{A}{n}$\@. 
From the inequality~(\ref{cauchy.ineq}) we conclude 
in the limit $m \to \infty$ that
      \begin{align}\label{ineq:seq}
          d(f_k, f^*) &\leq l^k\mbox{\@.}
      \end{align}
Since $T$ is a contraction, there is a sequence $(g_k)_{k \in \Nat}$
such that the inequality
          $d(f_{k+1}, g_k) \leq l\cdot d(f_k, f^*)$ holds.
Therefore, by inequality~(\ref{ineq:seq})\@:
      \begin{align*}
          d(f^*, g_k) &\leq \Max{d(f^*, f_{k+1})}
                                            {d(f_{k+1}, g_k)}  \\
                      &\leq \Max{d(f^*, f_{k+1})}{l \cdot d(f_{k}, f^*)} \\
                      &= l^{k+1}\mbox{\@.}
      \end{align*}
Now, $\lim_{k \to\infty} g_k = f^*$\@,
since  $\lim_{k \to\infty} d(f^*, g_k) = 0$\@.
But $g_k \in T(f^*)$ and all images of $T$ are closed, 
and therefore  the limit $f^*$ of the sequence $(g_k)$ 
is an element of $T(f^*)$\@.
Altogether, $f^* \in T(f^*)$\@. 
\end{proof}
In general, fixpoints of contractive multivalued maps in the sense of Definition~\ref{def:contractions} are not unique, but a "slightly" stronger requirement on contractiveness implies 
uniqueness of fixpoints.
\begin{definition}[Strong Contractions]\label{def:strong-contractions}~\\
Let $T$ be a nondeterministic stream transformer and $0 \leq l < 1$\@. 
If for all $f, g \in \StreamN{A}{n}$ and for all $u \in T(f)$ and $v \in T(g)$ such that 
   $
   d(u, v) \leq l \cdot d(T(f), T(g))$\@,
then $T$ is a {\em strong contraction} 
with Lipschitz constant $l$\@. 
\end{definition}
If $\emptyset \notin T(\StreamN{A}{n})$ then
every strong $l$-contraction also is a (weak) $l$-contraction 
in the sense of Definition~\ref{def:contractions}\@.
Moreover, a deterministic stream transformer $T$ is $l$-contractive if and only if it is strongly $l$-contractive. 
\begin{lemma}\label{lem:fixpoint.power.unique}
A nondeterministic strongly contractive 
stream transformer has at most one fixpoint.
\end{lemma}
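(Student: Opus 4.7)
The plan is to apply the strong contraction condition directly to two putative fixpoints. Suppose $f^*_1, f^*_2 \in \StreamN{A}{n}$ are both fixpoints of $T$, so that $f^*_1 \in T(f^*_1)$ and $f^*_2 \in T(f^*_2)$. The distinguishing feature of Definition~\ref{def:strong-contractions}, as opposed to the ordinary contraction notion of Definition~\ref{def:contractions}, is that the contraction inequality constrains \emph{every} pair $(u,v) \in T(f) \times T(g)$, not merely the existence of one close $v$ given $u$. This added strength is exactly what is needed to rule out distinct fixpoints, as the existential form allows the witness $v$ to wander away from a candidate second fixpoint.

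The key step is then to instantiate the strong contraction inequality at $f \Defeq f^*_1$, $g \Defeq f^*_2$, $u \Defeq f^*_1$, and $v \Defeq f^*_2$. Because $f^*_1, f^*_2$ are fixpoints, we have $u \in T(f)$ and $v \in T(g)$, so both choices are legitimate. The inequality then collapses to $d(f^*_1, f^*_2) \leq l \cdot d(f^*_1, f^*_2)$. Since $0 \leq l < 1$, this forces $d(f^*_1, f^*_2) = 0$, and by the Hausdorff (in fact point) separation of the ultrametric $d$ on $\StreamN{A}{n}$, $f^*_1 = f^*_2$.

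There is no real obstacle here: all of the difficulty has been absorbed into the strengthened hypothesis, which is precisely tailored so that both fixpoints can serve as witnesses $u$ and $v$ simultaneously on the right-hand side of the contraction bound. For the ordinary (weak) contraction notion, the analogous argument would only yield \emph{some} $v \in T(f^*_2)$ with $d(f^*_1, v) \leq l \cdot d(f^*_1, f^*_2)$, and this $v$ need not coincide with the second fixpoint $f^*_2$; the argument then fails, which is consistent with the fact mentioned after Lemma~\ref{lem:fixpoint.power} that weakly contractive multivalued maps may admit multiple fixpoints.
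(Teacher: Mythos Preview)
Your proof is correct and follows essentially the same approach as the paper's: both arguments instantiate the strong contraction condition at two putative fixpoints $f^*,g^*$, use that $f^*\in T(f^*)$ and $g^*\in T(g^*)$ so that the pair $(f^*,g^*)$ is admissible as $(u,v)$, and conclude $d(f^*,g^*)\leq l\cdot d(f^*,g^*)$ with $l<1$, hence $f^*=g^*$. The paper routes this through the intermediate bound $d(f^*,g^*)\leq\max\{d(u,v):u\in T(f^*),\,v\in T(g^*)\}$, while you instantiate directly; this is a cosmetic difference only.
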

\begin{proof}
We assume fixpoints $f^*$, $g^*$ of $T$\@.
Since $T$ is strongly contractive, there is an $l$ with $0 < l < 1$ such that
   $$
   d(f^*, g^*) \leq \max\Setc{d(u, v)}{u \in T(f^*), v \in T(g^*)} \leq l \cdot d(f^*, g^*)\mbox{\@.}
   $$
Thus, $d(f^*, g^*) = 0$, and
any two fixpoints of $T$ are equal. 
\end{proof}
For deterministic maps, the fixpoint iteration with $f_{k+1} \in T(f_k)$,
as constructed in the proof of Lemma~\ref{lem:fixpoint.power} reduces to the
{\em Picard iteration} $f_{k+1} = T(f_k)$\@.
Moreover, uniqueness of fixpoints for deterministic, $\delta>0$-causal maps 
can  also be shown directly, since  the equality of two arbitrary fixpoints 
$f^*$, $g^*$ directly follows from 
$d(f^*, g^*) =  d(T(f^*), T(g^*)) \leq 2^{-\delta} \cdot d(f^*, g^*)$\@. 
The following result directly follows from Lemmata~\ref{lem:causal-char-power},~\ref{lem:fixpoint.power}, and~\ref{lem:fixpoint.power.unique}\@.
\begin{theorem}\label{thm:main1}
A strongly causal stream transformer $T$ has a fixpoint if it is
(1) not the constant map to the empty set, and (2) $T(f)$ is closed for each stream $f$\@. 
If, in addition, $T$ is strongly contractive then this fixpoint is unique.
\end{theorem}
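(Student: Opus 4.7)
The plan is to assemble the theorem directly from the three preceding lemmata, paying close attention to matching the logical hypothesis of strong causality with the metric hypotheses that Lemmata~\ref{lem:fixpoint.power} and~\ref{lem:fixpoint.power.unique} actually require. Since "strongly causal" is a combinatorial/logical condition on prefixes and the existence and uniqueness results are stated in metric form (contraction, strong contraction), the first step is to invoke the causality–contraction connection, Lemma~\ref{lem:causal-char-power}. Applied to the hypothesis that $T$ is $\delta$-causal for some $\delta > 0$, this yields that $T: \StreamN{A}{n} \to \Powerset{\StreamN{A}{n}}\setminus\{\emptyset\}$ is an $l$-contraction in the sense of Definition~\ref{def:contractions} with Lipschitz constant $l \Defeq 2^{-\delta}$, and crucially $0 < l < 1$ because $\delta > 0$.

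For existence, I would then verify the three hypotheses of Lemma~\ref{lem:fixpoint.power} one by one: (a) $T$ is contractive, which we just obtained; (b) $T$ is not the constant map to $\emptyset$, which is assumption (1) of the theorem; and (c) every image $T(f)$ is closed, which is assumption (2). Lemma~\ref{lem:fixpoint.power} then produces a fixpoint $f^* \in T(f^*)$ as the limit of a Picard-style sequence $(f_k)$ with $f_{k+1} \in T(f_k)$ and $d(f_k, f_{k+1}) \leq l^k$. Convergence and capture of the limit inside $T(f^*)$ rest on the Cauchy completeness of $(\StreamN{A}{n}, d)$, which follows from Lemma~\ref{lem:product.ultrametric.complete}, together with the closedness of images. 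So the existence part is pure invocation.

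For the second half, under the additional assumption that $T$ is strongly contractive (Definition~\ref{def:strong-contractions}), I would directly quote Lemma~\ref{lem:fixpoint.power.unique}. Its one-line argument is that any two fixpoints $f^*$, $g^*$ satisfy $d(f^*, g^*) \leq l \cdot d(f^*, g^*)$ with $l < 1$, forcing $d(f^*, g^*) = 0$. The main obstacle is nothing mathematically deep; it is bookkeeping hygiene. In particular, one must keep straight that (i) the assumption "strongly causal" (with $\delta > 0$) is precisely what gives the \emph{strict} inequality $l < 1$ rather than merely $l \leq 1$, so that Lemma~\ref{lem:fixpoint.power} and not just a nonexpansiveness result is applicable; (ii) the qualifier "strongly" in "strongly contractive" refers to Definition~\ref{def:strong-contractions} and is a genuinely stronger condition than plain contractivity, needed only for uniqueness; and (iii) the non-vacuity and closedness hypotheses are exactly what forbid the degenerate $T_\emptyset$ and what let the Cauchy limit of the $g_k \in T(f^*)$ lie in $T(f^*)$. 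No new calculation is required beyond citing Lemmata~\ref{lem:causal-char-power},~\ref{lem:fixpoint.power}, and~\ref{lem:fixpoint.power.unique} in this order.
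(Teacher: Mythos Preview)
Your proposal is correct and matches the paper's own proof exactly: the paper states that Theorem~\ref{thm:main1} ``directly follows from Lemmata~\ref{lem:causal-char-power},~\ref{lem:fixpoint.power}, and~\ref{lem:fixpoint.power.unique},'' which is precisely the chain of invocations you describe. Your additional bookkeeping remarks (distinguishing strong causality from strong contractivity, noting $l=2^{-\delta}<1$, and tracking where closedness and non-vacuity are consumed) only make the dependencies more explicit than the paper does.
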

As an immediate consequence of Theorem~\ref{thm:main1} we obtain as a special case a reformulation of Banachs's fixpoint principle for streams.
\begin{corollary}\label{prop:fixpoints.deterministic}
Strongly causal deterministic stream transformers
have a unique fixpoint. 
\end{corollary}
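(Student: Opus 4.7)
The plan is to derive the corollary as a direct specialization of Theorem~\ref{thm:main1}, after observing that the deterministic case collapses all the side conditions into trivial ones. First, I would identify a deterministic stream transformer $T: \StreamN{A}{n} \to \StreamN{A}{n}$ with the multivalued transformer $\tilde{T}: \StreamN{A}{n} \to \Powerset{\StreamN{A}{n}}$ given by $\tilde{T}(f) \Defeq \{T(f)\}$. Since the ultrametric topology is Hausdorff, every singleton $\{T(f)\}$ is closed (in fact compact), so condition~(2) of Theorem~\ref{thm:main1} is immediate; and since every image is a nonempty singleton, $\tilde{T}$ is certainly not the constant map to $\emptyset$, discharging condition~(1).

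Next, I would invoke Corollary~\ref{thm:causal-equiv-contractive} to translate strong causality of $T$ into (strict) contractivity of $\tilde{T}$ with Lipschitz constant $2^{-\delta} < 1$ for some $\delta > 0$. It remains to verify that $\tilde{T}$ is \emph{strongly} contractive in the sense of Definition~\ref{def:strong-contractions}, so that the uniqueness clause of Theorem~\ref{thm:main1} applies. But for singleton-valued maps the two notions coincide: the only $u \in \tilde{T}(f)$ and $v \in \tilde{T}(g)$ are $T(f)$ and $T(g)$, and $d(T(f), T(g)) = d(\tilde{T}(f), \tilde{T}(g))$, so the weak contraction inequality $d(u,v) \leq l \cdot d(f,g)$ coming from Corollary~\ref{thm:causal-equiv-contractive} is literally the strong contraction inequality $d(u,v) \leq l \cdot d(\tilde{T}(f), \tilde{T}(g))$ after unfolding. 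This matches the remark in the paper that for deterministic transformers contractivity and strong contractivity agree.

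With all hypotheses of Theorem~\ref{thm:main1} verified, I conclude that $\tilde{T}$ has a unique fixpoint $f^* \in \tilde{T}(f^*) = \{T(f^*)\}$, i.e.\ $f^* = T(f^*)$, which is the desired unique fixpoint of $T$.

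There is no genuine obstacle here; the only point deserving a sentence of care is the equivalence between the singleton-valued multivalued formulation and the ordinary functional one (in particular, that the isometric embedding $f \mapsto \{f\}$ transports the Picard iteration and the metric unchanged). As a sanity check, one could also give a direct one-line argument bypassing Theorem~\ref{thm:main1}: the space $(\StreamN{A}{n}, d)$ is spherically complete by Lemma~\ref{lem:product.ultrametric.complete}, $T$ is strictly contractive on it, and any two fixpoints $f^*, g^*$ satisfy $d(f^*, g^*) = d(T(f^*), T(g^*)) \leq 2^{-\delta} \cdot d(f^*, g^*)$, forcing $d(f^*, g^*) = 0$; existence follows from the Picard iteration $f_{k+1} = T(f_k)$, which is Cauchy by the ultrametric estimate and converges in the complete space $\StreamN{A}{n}$ to a fixpoint.
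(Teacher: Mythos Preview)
Your proposal is correct and follows the paper's own route: the corollary is presented there as an immediate consequence of Theorem~\ref{thm:main1}, and you spell out exactly the verifications (singletons are nonempty and closed, deterministic maps are strongly contractive) that make this specialization go through.

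One minor slip worth cleaning up: in your check of strong contractivity you write that $d(u,v) \leq l \cdot d(f,g)$ ``is literally'' the inequality $d(u,v) \leq l \cdot d(\tilde T(f), \tilde T(g))$, but these are not the same bound. The intended right-hand side in Definition~\ref{def:strong-contractions} is $l \cdot d(f,g)$ (as the proof of Lemma~\ref{lem:fixpoint.power.unique} makes clear), and with that reading your argument---that for singleton-valued maps the unique $u = T(f)$, $v = T(g)$ already satisfy $d(u,v) \leq l\cdot d(f,g)$ by ordinary contractivity---is exactly the paper's remark that weak and strong contractivity coincide in the deterministic case. Your alternative one-line direct argument via Picard iteration is also correct and is essentially the paper's comment immediately preceding Theorem~\ref{thm:main1}.
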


\paragraph{Fixpoint Induction.}
The fixpoint iteration for strongly causal stream transformers in Theorem~\ref{thm:main1}
is used to derive a {\em fixpoint induction principle}\@.
The overall approach is analogous to deriving fixpoint induction for 
Scott-continuous functions on complete partial orders~\cite{winskel1993formal}\@. 
for strongly causal stream transformers.
\begin{lemma}[Fixpoint Induction]\label{lem:fixpoint-principle-strong}
Let $T$ be a strongly causal stream transformer as in Theorem~\ref{thm:main1} with a unique fixpoint ${\mathrm fix}(T)$
and $P$ a closed set of streams; then\@:
   \begin{align}
      (\Forall{f \in P}{T(f) \in P}) \Imp {\mathrm fix}(T) \in P\mbox{\@.}
   \end{align}
\end{lemma}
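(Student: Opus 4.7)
The plan is to piggyback on the Picard-style iteration used in the proof of Lemma~\ref{lem:fixpoint.power}, but start it inside the invariant set $P$ so that every iterate stays in $P$, and then invoke closedness of $P$ and uniqueness of the fixpoint to conclude.

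Assume $P$ is nonempty (otherwise the premise forces $\mathrm{fix}(T) \in P$ to be vacuous, but since $\mathrm{fix}(T)$ is assumed to exist, the lemma is interesting only in this case). Pick any $f_0 \in P$. By the hypothesis, $T(f_0) \subseteq P$, so in particular $T(f_0)$ is nonempty and we can choose some $f_1 \in T(f_0)$, with $f_1 \in P$. Now mimic the recursive construction in the proof of Lemma~\ref{lem:fixpoint.power}: using strong causality (equivalently, strict contraction by Corollary~\ref{thm:causal-equiv-contractive}) with Lipschitz constant $l \Defeq 2^{-\delta} < 1$, for each $k \geq 1$ pick $f_{k+1} \in T(f_k)$ with
\begin{align*}
 d(f_{k+1}, f_k) &\leq l \cdot d(f_k, f_{k-1})\mbox{\@.}
\end{align*}
By the hypothesis $T(f_k) \subseteq P$, every such $f_{k+1}$ belongs to $P$, so by induction the entire sequence $(f_k)_{k\in\Nat}$ lies in $P$.

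Exactly as in Lemma~\ref{lem:fixpoint.power}, the strong triangle inequality together with $d(f_{k+1}, f_k) \leq l^k \cdot d(f_0, f_1)$ shows that $(f_k)$ is Cauchy in the spherically complete ultrametric space $(\StreamN{A}{n}, d)$ (Lemma~\ref{lem:streams.spherically.complete}), hence converges to some limit $f^*$, and the same closure argument used there establishes $f^* \in T(f^*)$. Under the uniqueness hypothesis of Theorem~\ref{thm:main1} we have $f^* = \mathrm{fix}(T)$.

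Finally, since each $f_k \in P$ and $P$ is closed in the prefix-metric topology, the limit belongs to $P$, yielding $\mathrm{fix}(T) = f^* \in P$. The only subtle point is making sure that the iteration witnessing Cauchyness can be started inside $P$ and kept there at each step; this is precisely what the hypothesis $T(f) \subseteq P$ for $f \in P$ (the natural multivalued reading of $T(f) \in P$) delivers, while closedness of $P$ supplies the final step from iterates to the limit. The argument is in the spirit of Scott's fixpoint induction for admissible predicates on cpos~\cite{winskel1993formal}, with closedness in the prefix metric playing the role of admissibility.
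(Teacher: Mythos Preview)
Your proposal is correct and follows essentially the same approach as the paper's proof: construct the Picard-style sequence $(f_k)$ from Lemma~\ref{lem:fixpoint.power}, show by induction that all iterates lie in $P$, and use closedness of $P$ to conclude that the limit $\mathrm{fix}(T)$ belongs to $P$. You are in fact more careful than the paper in explicitly starting the iteration at some $f_0 \in P$ and in invoking uniqueness to identify the resulting limit with $\mathrm{fix}(T)$.
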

\begin{proof}
Let $(f_k)$ be the sequence with $f_{k+1} \in T(f_k)$ for $k \in \Nat$
and  ${\mathrm fix}(T) = \lim_{k\to \infty} f_k$
as constructed in the proof of Lemma~\ref{lem:fixpoint.power}\@.
From the assumption and natural induction on $k$ we obtain that $f_k \in P$ for all $k \in \Nat$\@.
But $P$ is closed, and therefore ${\mathrm fix}(T) \in P$\@.
\end{proof}

\subsection{Shrinking Maps}

%

If $(M, d)$ is a spherically complete ultrametric space, 
then every shrinking map $T: M \to M$ has a unique fixpoint (\cite{petalas1993fixed}, Theorem~1)\@. 
The proof of this result relies on
Zorn's lemma for showing the existence of a maximal, with respect to set inclusion, ball
$B_z$ in the set of balls of the form $B_x \Defeq \BC{x}{d(x, T(x))}$ for $x \in X$; this $z \in X$ is the unique fixpoint of $T$\@. 
A "more constructive proof", not relying on Zorn's lemma,
also shows that there is fixpoint of $T$ 
in every ball of the form $\BC{x}{d(x, T(x))}$ for $x \in X$ (Corollary 5 in~\cite{kirk2012some}, see also~\cite{kirk2014fixed}, Chapter 5.5)\@.

The following statement follows directly from Theorem~2.1 in~\cite{kubiaczyk1996multivalued} and the fact that $\StreamN{A}{n}$ is a spherically complete ultrametric space~\ref{lem:product.ultrametric.complete}\@.
This extension of the results in~\cite{petalas1993fixed} for multivalued 
functions also relies on the application of Zorn's lemma.
Notice that these results 
are stated for complete non-Archimedean normed spaces, but they evidently hold also in  case of spherically complete ultrametric spaces.
\begin{lemma}\label{lem:shrinking}
The nondeterministic stream transformer $T: \StreamN{A}{n} \to \Compact{\StreamN{A}{n}}$ 
has a fixpoint if
 \begin{align}\label{shrinking-condition}
  \Hausdorff{d}(T(f), T(g)) &< d(f, g)
  \end{align} 
for any distinct $f, g \in \StreamN{A}{n}$\@.
\end{lemma}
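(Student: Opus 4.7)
The plan is to apply a Zorn's lemma-style argument in the spherically complete ultrametric space $(\StreamN{A}{n}, d)$ from Lemma~\ref{lem:product.ultrametric.complete}, essentially reproducing the strategy of~(\cite{kubiaczyk1996multivalued}, Theorem~2.1), which extends the single-valued argument of~\cite{petalas1993fixed} to multivalued maps. For each $f \in \StreamN{A}{n}$, I would introduce $\rho(f) \Defeq d(f, T(f))$ together with the closed ball $B_f \Defeq \BC{f}{\rho(f)}$. Since $T(f)$ is nonempty and compact, the infimum defining $\rho(f)$ is attained by some element of $T(f)$, and a fixpoint corresponds exactly to an $f$ with $\rho(f) = 0$, using closedness of $T(f)$. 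The strategy is then to find a minimal element of the family $\mathcal{B} \Defeq \{B_f : f \in \StreamN{A}{n}\}$, ordered by set inclusion, and to show that its center is a fixpoint.

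For the Zorn step, take any chain $\mathcal{C} \subseteq \mathcal{B}$; its balls are nested, and spherical completeness yields some $z \in \Inter_{B \in \mathcal{C}} B$. For each $g$ with $B_g \in \mathcal{C}$, pick by compactness $v \in T(g)$ with $d(g, v) = \rho(g)$, and combine the generalized strong triangle inequality with the shrinking condition to obtain
\begin{align*}
\rho(z) \leq \max\bigl(d(z, g),\, d(g, v),\, d(v, T(z))\bigr) \leq \rho(g),
\end{align*}
where $d(z, g) \leq \rho(g)$ because $z \in B_g$, $d(g, v) = \rho(g)$ by choice, and $d(v, T(z)) \leq \Hausdorff{d}(T(g), T(z)) \leq d(g, z) \leq \rho(g)$ by the shrinking hypothesis. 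Hence $B_z \subseteq B_g$ for every $B_g \in \mathcal{C}$, so $B_z$ is a lower bound for $\mathcal{C}$, and Zorn's lemma produces a minimal $B_{z^*} \in \mathcal{B}$.

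To finish, I would suppose for contradiction that $\rho(z^*) > 0$ and pick $u \in T(z^*)$ realizing $d(z^*, u) = \rho(z^*)$. Since $u \in T(z^*)$, the shrinking condition gives
\begin{align*}
\rho(u) = d(u, T(u)) \leq \Hausdorff{d}(T(z^*), T(u)) < d(z^*, u) = \rho(z^*).
\end{align*}
In an ultrametric, $u \in B_{z^*}$ together with $\rho(u) < \rho(z^*)$ forces $B_u \subseteq B_{z^*}$, while $z^* \notin B_u$ because $d(z^*, u) = \rho(z^*) > \rho(u)$; hence $B_u \subsetneq B_{z^*}$, contradicting minimality of $B_{z^*}$. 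Therefore $\rho(z^*) = 0$, and closedness of $T(z^*)$ yields $z^* \in T(z^*)$.

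The main obstacle I anticipate is the chain step: showing $B_z \subseteq B_g$ cleanly requires orchestrating compactness of $T(g)$ (to realize $\rho(g)$ at some $v$), the shrinking hypothesis (to bound $d(v, T(z))$ via the Hausdorff distance), and the strong triangle inequality in the right order. The remaining Zorn-plus-contradiction skeleton is then a direct transcription of the single-valued argument of~\cite{petalas1993fixed} to the multivalued setting.
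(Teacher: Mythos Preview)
Your proposal is correct and follows essentially the same route as the paper: the paper simply invokes Theorem~2.1 of~\cite{kubiaczyk1996multivalued} together with the spherical completeness of $(\StreamN{A}{n}, d)$ from Lemma~\ref{lem:product.ultrametric.complete}, and what you have written is precisely a clean unpacking of that cited argument (the Zorn-based extension of~\cite{petalas1993fixed} to multivalued maps), with compactness of $T(f)$ used exactly where needed to realize the infima.
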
 
Notice that every strongly causal map $T$ as above
satisfies the shrinking condition~(\ref{shrinking-condition})\@. 
\begin{theorem}\label{thm:main2}
Every strongly causal stream transformer $T: \StreamN{A}{n} \to \Compact{\StreamN{A}{n}}$ 
has a fixpoint. 
\end{theorem}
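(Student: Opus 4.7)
The plan is to show Theorem~\ref{thm:main2} as an essentially immediate corollary of Lemma~\ref{lem:shrinking}, with Corollary~\ref{thm:causal-equiv-hausdorff} serving as the bridge between the logic-based hypothesis (strong causality) and the metric-based hypothesis (strict inequality on Hausdorff distances). So the task is just to verify the shrinking condition and cite the existing fixpoint result.

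First, I would invoke Corollary~\ref{thm:causal-equiv-hausdorff}(2): since $T$ is strongly causal with compact codomains in $\Compact{\StreamN{A}{n}}$, it is Lipschitz contractive. Concretely, if $T$ is $\delta$-causal for some $\delta > 0$, then the corollary gives a Lipschitz constant $l = 2^{-\delta}$ with $0 < l < 1$ such that $\Hausdorff{d}(T(f), T(g)) \leq l \cdot d(f, g)$ for all $f, g \in \StreamN{A}{n}$.

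Next, I would derive the strict shrinking inequality~(\ref{shrinking-condition}) from this. For any two distinct $f, g \in \StreamN{A}{n}$, the prefix distance $d(f, g)$ is strictly positive, and therefore
\[
   \Hausdorff{d}(T(f), T(g)) \;\leq\; l \cdot d(f, g) \;<\; d(f, g),
\]
using $l < 1$. Hence $T$ satisfies the hypothesis of Lemma~\ref{lem:shrinking}.

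Finally, since $(\StreamN{A}{n}, d)$ is a spherically complete ultrametric space by Lemma~\ref{lem:product.ultrametric.complete}, and since $T$ is a nondeterministic map with nonempty compact values satisfying the shrinking condition, Lemma~\ref{lem:shrinking} applies and yields a fixpoint $f^* \in T(f^*)$, completing the proof. There is no serious obstacle: all of the genuine work is already packaged inside Corollary~\ref{thm:causal-equiv-hausdorff} (causality $\Leftrightarrow$ Lipschitz contraction) and Lemma~\ref{lem:shrinking} (the Zorn's-lemma-based multivalued fixpoint theorem in spherically complete ultrametric spaces, inherited from~\cite{kubiaczyk1996multivalued}); the theorem just assembles them. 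The only subtle point to double-check is that moving from the non-strict Lipschitz bound $\leq l\cdot d(f,g)$ to the strict bound $< d(f,g)$ really requires $f \neq g$, which is exactly the hypothesis on which Lemma~\ref{lem:shrinking} is stated.
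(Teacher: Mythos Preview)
Your proposal is correct and follows essentially the same route as the paper: the paper simply remarks that every strongly causal map satisfies the shrinking condition~(\ref{shrinking-condition}) (implicitly via Corollary~\ref{thm:causal-equiv-hausdorff}) and then states Theorem~\ref{thm:main2} as an immediate consequence of Lemma~\ref{lem:shrinking}. Your write-up is in fact more explicit than the paper's one-line derivation, spelling out the Lipschitz constant $l = 2^{-\delta}$ and the passage from $\leq l\cdot d(f,g)$ to $< d(f,g)$ for $f \neq g$.
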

Since every singleton set is compact, we obtain the existence of a fixpoint, in particular, for deterministic maps. 
Uniqueness of this fixpoint follows, for example, from (\cite{kirk2012some}; see also \cite{kirk2014fixed}, Theorem 5.4)\@. 
\begin{corollary}\label{cor:shrinking.fixpoint}
Every strongly causal deterministic stream transformer $T: \StreamN{A}{n} \to\StreamN{A}{n}$ has a unique fixpoint. 
\end{corollary}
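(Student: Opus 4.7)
The plan is to reduce the corollary to Theorem~\ref{thm:main2} for existence and to derive uniqueness directly from the causality-contraction connection. First I would observe that a deterministic map $T: \StreamN{A}{n} \to \StreamN{A}{n}$ can be viewed as a multivalued map $\hat{T}: \StreamN{A}{n} \to \Compact{\StreamN{A}{n}}$ via $\hat{T}(f) \Defeq \{T(f)\}$, using that singleton sets are trivially compact. Since the embedding $\iota: f \mapsto \{f\}$ is an isometry (as noted in the preamble to Section~\ref{sec:fixpoints}), $\hat{T}$ inherits strong causality from $T$. Theorem~\ref{thm:main2} therefore yields some $f^* \in \StreamN{A}{n}$ with $f^* \in \hat{T}(f^*) = \{T(f^*)\}$, i.e.\ $f^* = T(f^*)$, establishing the existence of a fixpoint.

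For uniqueness I would invoke Corollary~\ref{thm:causal-equiv-contractive}(2): since $T$ is strongly causal, it is contractive, so there exists $\delta > 0$ with
\begin{align*}
d(T(f), T(g)) &\leq 2^{-\delta} \cdot d(f, g)
\end{align*}
for all $f, g \in \StreamN{A}{n}$. If $f^*$ and $g^*$ are two fixpoints, then
\begin{align*}
d(f^*, g^*) &= d(T(f^*), T(g^*)) \leq 2^{-\delta} \cdot d(f^*, g^*)\mbox{\@,}
\end{align*}
and because $0 < 2^{-\delta} < 1$, this forces $d(f^*, g^*) = 0$, hence $f^* = g^*$.

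There is no real obstacle here: the existence part is a direct specialisation of Theorem~\ref{thm:main2} through the isometric embedding, and the uniqueness part is the familiar one-line Banach-style contradiction argument made available by Corollary~\ref{thm:causal-equiv-contractive}. The only small subtlety worth flagging is the justification that the multivalued lifting $\hat{T}$ is again strongly causal, which is immediate from Definition~\ref{def:causal.stream-transformer} since $[X^{k+\delta}]\hat{T}(f) = \{[X^{k+\delta}]T(f)\}$, so prefix agreement of inputs transfers pointwise to the singleton outputs.
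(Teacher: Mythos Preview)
Your proposal is correct, and the existence half matches the paper exactly: it too observes that singletons are compact and specialises Theorem~\ref{thm:main2} via the isometric embedding $f \mapsto \{f\}$. The only difference is in the uniqueness argument. The paper at this point cites the external shrinking-map result (\cite{kirk2012some}; \cite{kirk2014fixed}, Theorem~5.4) for spherically complete ultrametric spaces, whereas you use the direct Banach-style one-liner $d(f^*,g^*) = d(T(f^*),T(g^*)) \leq 2^{-\delta} d(f^*,g^*)$ coming from Corollary~\ref{thm:causal-equiv-contractive}. Your route is more self-contained and is in fact already spelled out earlier in the paper (just after Lemma~\ref{lem:fixpoint.power.unique}); the paper's citation buys a slightly stronger statement (uniqueness for shrinking, not just Lipschitz-contractive, maps), but for the corollary as stated your argument is both sufficient and more elementary.
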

The fixpoint for shrinking deterministic stream transformers in Corollary~\ref{cor:shrinking.fixpoint} is obtained as the limit of a transfinite iteration~(\cite{priess2013approximation}; see also~\cite{kirk2014fixed}, Remark 5.5)\@. 
Now, one obtains an induction principle for shrinking (and therefore also strongly causal) deterministic stream transformers
analogously to Lemma~\ref{lem:fixpoint-principle-strong}\@. 


\subsection{Nonexpansive Maps}

In the light of Lemma~\ref{lem:hoare-equiv-hausdorff} we can use causality instead of
the equivalent nonexpansion property ($\Hausdorff{d}(T(f), T(g)) \leq d(f, g)$, for all $f, g$) 
in Theorem 2.2 of~\cite{kubiaczyk1996multivalued}\@. 
\begin{theorem}\label{thm:main3}
If the nondeterministic stream transformer $T: \StreamN{A}{n} \to \Compact{\StreamN{A}{n}}$ is causal
then either $T$ has a fixpoint or there exists a ball $B$ with radius $r > 0$ that that  $d(u, T(u)) = r$ 
for all $u \in B$\@. 
\end{theorem}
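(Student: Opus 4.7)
The plan is to reduce to the multivalued nonexpansion dichotomy on spherically complete ultrametric spaces, which is the compact/multivalued analogue of the Petalas--Vidalis result recalled at the end of Section~\ref{sec:preliminaries}. First I would translate the hypothesis: by Corollary~\ref{thm:causal-equiv-hausdorff}, the weak causality of $T : \StreamN{A}{n} \to \Compact{\StreamN{A}{n}}$ is equivalent to $\Hausdorff{d}(T(f), T(g)) \leq d(f, g)$ for all $f, g$, i.e.\ $T$ is Lipschitz nonexpansive. The ambient space $(\StreamN{A}{n}, d)$ is spherically complete (Lemma~\ref{lem:product.ultrametric.complete}), its balls are clopen and every inner point of a ball is its center, and the values of $T$ lie in $\Compact{\StreamN{A}{n}}$, so that each $d(x, T(x))$ is attained by some element of $T(x)$.

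Next I would invoke a Zorn-style argument on the family of closed balls $B_x \Defeq \BC{x}{d(x, T(x))}$, ordered by reverse inclusion. For a descending chain $(B_{x_\alpha})_\alpha$, spherical completeness gives a point $z$ in the intersection. Using the strong triangle inequality together with nonexpansion, $d(z, T(z)) \leq \max(d(z, x_\alpha), \Hausdorff{d}(T(x_\alpha), T(z))) \leq d(x_\alpha, T(x_\alpha))$, so $B_z$ sits inside every $B_{x_\alpha}$ and hence is a lower bound for the chain. Zorn's lemma then yields a minimal ball, which I again call $B_z$.

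Finally I would execute the dichotomy on $B_z$. Set $r \Defeq d(z, T(z))$. If $r = 0$, then because $T(z)$ is compact (hence closed), $z \in T(z)$ and we have the first alternative, a fixpoint. Otherwise $r > 0$; for any $u \in B_z$, the "every point is a center" property of ultrametric balls gives $B_u = \BC{u}{r} = B_z$, and nonexpansion together with $d(u, z) \leq r$ yields $d(u, T(u)) \leq r$. The strict inequality $d(u, T(u)) < r$ would however produce a strictly smaller ball $B_u \subsetneq B_z$ lying in the family, contradicting the minimality of $B_z$. Hence $d(u, T(u)) = r$ for every $u \in B_z$, which is the second alternative with $B \Defeq B_z$.

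The main obstacle is the Zorn step: concretely, verifying that the "limit" ball produced from a descending chain still belongs to the family, i.e.\ that $B_z \subseteq B_{x_\alpha}$ for every $\alpha$. This hinges on combining spherical completeness, nonexpansion of $\Hausdorff{d}$, and compactness of the images (to actually realise the distance $d(x, T(x))$ and propagate bounds through the chain). Once the minimal $T$-invariant ball has been secured, the two-way dichotomy is only a short manipulation with the strong triangle inequality and the equivalence between weak causality and Lipschitz nonexpansion supplied by Corollary~\ref{thm:causal-equiv-hausdorff}.
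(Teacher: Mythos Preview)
Your approach is exactly the paper's: it simply invokes Theorem~2.2 of~\cite{kubiaczyk1996multivalued} after translating causality into Lipschitz nonexpansion via Corollary~\ref{thm:causal-equiv-hausdorff} (equivalently Lemma~\ref{lem:hoare-equiv-hausdorff}) and recalling that $(\StreamN{A}{n}, d)$ is spherically complete, whereas you unpack the Zorn argument behind that theorem. One small slip to fix: the displayed bound $d(z, T(z)) \leq \max\bigl(d(z, x_\alpha),\, \Hausdorff{d}(T(x_\alpha), T(z))\bigr)$ is not valid as written (the term $d(x_\alpha, T(x_\alpha))$ is missing from the maximum); the correct chain is $d(z, T(z)) \leq \max\bigl(d(z, x_\alpha),\, d(x_\alpha, T(x_\alpha)),\, \Hausdorff{d}(T(x_\alpha), T(z))\bigr) \leq d(x_\alpha, T(x_\alpha))$, obtained by routing through a nearest point of $T(x_\alpha)$, after which your argument goes through unchanged.
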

\begin{example}
$Succ(x) \Defeq x + 1$ is causal, 
for all  $u \in \Stream{\Real} = \BC{0}{1}$ we have $d(u, Succ(b)) = 1$,  
and  $Succ$ does not have a fixpoint in $\Stream{\Real}$\@. 
\end{example}

Theorem~\ref{thm:main3} provides a concise verification condition to establish
the existence of fixpoints for causal stream transformers in SAL such as the one discussed in Example~\ref{ex:sal}\@.

\section{Related Work}\label{sec:remarks}

The causality-contraction correspondence for the special case of deterministic stream transformers follows from a straightforward unfolding of the definitions. 
This was already observed by 
de~Bakker and Zucker~\cite{de1982processes},
as the basis for using Banach's fixed point theorem to replace the least fixed point approach of denotational semantics based on complete partial orders. 
In particular, de~Bakker's notion of {\em guardedness} is closely related to strong causality~\cite{de2001fixed}\@.
The correspondence between causality and contraction for deterministic stream transformers has also been heavily used in the context of functional reactive programming (for, example~\cite{krishnaswami2011ultrametric,birkedal2012first}) for solving recursive stream equations\@.
Broy~(\cite{broy2023}, A.3) uses this correspondence to solve stream inclusions on timed streams of the form  $x \in F(y)$\@. In his approach, stream inclusions are replaced with stronger equalities of the form $\bar{x} = g(\bar{y})$, where $g$ is a so-called {\em strongly causal representation} of $F$ and $\bar{x}$, $\bar{y}$ are untimed streams corresponding to $x$ and $y$, respectively\@.
Unlike previous work, we extend the
causality-contraction correspondence to 
nondeterministic stream transformers and solve causal stream inclusions directly by corresponding fixed-point results motivated by results from non-Archimedean functional analysis.

The statement and proof of Theorem~\ref{thm:main1} is based on analogous fixpoint results for multivalued maps by Nadler~(\cite{nadler1969multi}, Theorem~5)\@.
Generalizations are described, for example, 
in~\cite{priess2000ultrametric,gajic2001ultrametric,hajimojtahed2016implicit,rao2008common}\@.
In particular, the papers~\cite{priess2000ultrametric,priess1993fixed} are concerned with ultrametric spaces whose distance functions take their values in an arbitrary partially ordered set, not just in the real numbers, and Theorem~\ref{thm:main1} may also have been obtained from the results~(\cite{priess2000ultrametric}, 3.1)\@, but here we prefer to stay in the framework of real-valued metrics.  

Khamsi~\cite{khamsi1993new} uses a generalization of the Banach fixpoint principle to multivalued maps on a Cauchy complete metric space for developing a fixpoint semantics of  stratified disjunctive logic programs 
(see also~\cite{hitzler2010mathematical})\@. 
We are not aware, however, of previous attempts to develop fundamental concepts of system design based on multivalued fixpoints principles in the ultrametric space of streams.


The derivation of the fixpoint induction principle 
(Lemma~\ref{lem:fixpoint-principle-strong}) relies on 
fixpoint iteration.
This induction principle is reminiscent, of course,
of the fixpoint induction principle for 
Scott-continuous functions and Kleene's fixpoint theorem~\cite{winskel1993formal}\@. 
Moreover, the induction principle for multivalued stream 
transformers is analogous to Park's lemma, which is an immediate consequence of the Knaster-Tarski fixpoint theorem for monotone functions on complete lattices.


Vector-valued metrics with codomain $\Real^n$\@, for $n \geq 1$, are a viable alternative to the use of the supremum valuation for products of streams. 
The classical Banach contraction principle was extended by Perov~\cite{perov1964cauchy} for contraction mappings on spaces endowed with (finite dimensional) vector-valued metrics with codomain $\Real^n$\@.
Generalizations to multivalued contraction mappings 
have been developed, among others, by Filip and Perusel~\cite{filip2010fixed,almalki2022perov}\@.
These fixpoint theorems rely on contractive mappings.
Therefore, they are applicable to strongly causal but not to weakly causal stream transformers.
The advantage of taking the supremum valuation is that fixpoint results for multivalued mappings apply readily, whereas the use of vector-valued metrics requires explicit extensions of these results to this modified setting. 
On the other hand, contractions of endomorphisms with respect to vector-valued metric can be defined more generally
by requiring a square matrix $L$ with nonnegative entries such that $\lim_{k\to\infty} L^k = 0$\@. 
Here we restricted developments to the cases where $L$ is of the specific form $l \cdot I$ for $l < 1$ a Lipschitz constants and $I$ the identity matrix. 

More constructive proofs of fixpoints of deterministic transformers on spherically complete ultrametric spaces result in transfinite Picard iterations~\cite{khamsi2011introduction,kirk2014fixed}\@.
These kinds of results could therefore be used
to formulate induction principles for the solutions of deterministic causal stream transformers. But we are not aware of 
general results in this direction for 
nondeterministic maps.  





\section{Conclusions}\label{sec:conclusions}
Fundamental concepts of system design such as interfaces, composition, refinement, and abstraction
are defined and derived 
from the causality-contraction connection together with established results in fixpoint and approximation theory.
Moreover, 
the correspondence between the functional and a more machine-oriented view of causal stream transformers may advantageously be used in system design to switch between these intertwined view points as needed.
It is well beyond the scope and the ambition of this exposition, however, to develop a comprehensive and readily applicable framework for system design~\cite{broy2023}\@.

Quantifiable fixpoint approximation is one of the main practical advantages of the proposed metric-based approach to system design. 
These bounds provide useful {\em anytime information} both to human system designers and
to mechanized verification and design engines; such as: "the required 1024 element prefix of the stream solution has already been established in the current Picard iteration"\@.

In some cases, the causality condition on stream maps may be considered to be too  restrictive as it includes continuity.
This condition, however, can be relaxed since 
the ultrametric space of streams is 
{\em $\varepsilon$-chainable} (for given $ \varepsilon > 0$);
that is, there is a finite path between any two streams with intermediate "jumps" less than $\varepsilon$\@.  
For $\varepsilon$-chainable spaces a local 
contraction condition suffices to obtain
fixpoints for single-~\cite{edelstein1961extension} and
multivalued maps~\cite{nadler1969multi}\@.

The causality-contraction connection easily, but at the expense of notational overhead, extends to transformers with a mix of discrete and 
dense (that is, indices are real-valued) streams in heterogeneous products of streams to 
model {\em hybrid systems}\@. 
Moreover, {\em probabilistic systems} may be
modeled by Menger's probabilistic metric space, or any variant thereof, in which the distance between any two points is a probability distribution function~\cite{sherwood1971complete,kutbi2015further}\@. 
Together, these developments yield a comprehensive mathematical foundation
for the formal construction of cyber-physical systems~\cite{lee2008cyber,broy2013challenges} and their realization on computing machinery.

\bibliographystyle{ACM-Reference-Format}
\bibliography{main}


\end{document}